\newcommand\underlay[4]{%
  \stackengine{0pt}%
  {\kern#2\includegraphics[height=#1]{#4}}%
  {\includegraphics[height=#1]{#3}}%
  {O}{l}{F}{F}{L}%
}
\newcommand{\utwi}[1]{\boldsymbol{#1}}
\newcolumntype{z}[1]{D{.}{.}{#1}}
\newcommand{\ignore}[1]{}{}
\newcommand{\R}{\ensuremath{\mathbb{R}}}
\newcommand{\N}{\ensuremath{\mathbb{N}}}
\newcommand{\bone}{\mathbf{1}}
\newcommand{\bzero}{\mathbf{0}}
\newcommand{\id}{\mathbbm{1}}
\renewcommand{\S}{\operatorname{S}}
\newcommand{\V}{\operatorname{V}}
\newcommand{\bx}{\bm{x}}
\newcommand{\bX}{\bm{X}}
\date{}
\theoremstyle{definition}
\newtheorem{proposition}{Proposition}
\newtheorem{remark}{Remark}
\begin{document}
\begin{titlepage}
\title{Forecasting and Backtesting Gradient Allocations of Expected Shortfall}

\author{
Takaaki Koike\thanks{Corresponding author: Takaaki Koike, E-mail: takaaki.koike@r.hit-u.ac.jp }$\:\:^{1}$,
Cathy W.S. Chen$^{2}$, and Edward M.H. Lin$^{3}$\\
$^{1}$Graduate School of Economics, Hitotsubashi University, Japan\\
$^{2}$Department of Statistics, Feng Chia University, Taiwan\\
$^{3}$Department of Statistics, Tunghai University, Taiwan
}

\maketitle
\setcounter{page}{1}

\begin{abstract}
Capital allocation is a procedure for quantifying the contribution of each source of risk to aggregated risk.
The gradient allocation rule, also known as the Euler principle, is a prevalent rule of capital allocation under which the allocated capital captures the diversification benefit of the marginal risk as a component of overall risk.
This research concentrates on Expected Shortfall (ES) as a regulatory standard and focuses on the gradient allocations of ES, also called ES contributions (ESCs).
We present the comprehensive treatment of backtesting the tuple of ESCs 
in the framework of the traditional and comparative backtests based on the concepts of joint identifiability and multi-objective elicitability.
For robust forecast evaluation against the choice of scoring function, we also extend the Murphy diagram, a graphical tool to check whether one forecast dominates another under a class of scoring functions, to the case of ESCs.
Finally, leveraging the recent concept of multi-objective elicitability, we propose a novel semiparametric model for forecasting dynamic ESCs based on a compositional regression model.
In an empirical analysis of stock returns we evaluate and compare a variety of models for forecasting dynamic ESCs and demonstrate the outstanding performance of the proposed model.
\end{abstract}

\noindent{JEL classification}: 
C02, 
C51, 
C52, 
C53, 
G22. 

\noindent{Keywords}: 
Capital allocation;
Compositional data analysis;
Dynamic risk management;
Elicitability;
Euler principle;
Expected shortfall.
\end{titlepage}

\newpage

\section{Introduction}\label{sec:intro}

Risk aggregation and risk allocation are two important procedures for financial risk management.
The total risk of a portfolio or a financial institution is quantified in the first stage of risk aggregation, where financial regulation may require the use of certain risk measures, such as \emph{value-at-risk (VaR)} and \emph{expected shortfall (ES)}.
Due to some deficiencies of VaR,
ES has become a standard risk measure in banking regulation~\citep{bcbs2016minimum,bcbs2019minimum}.
Since ES is coherent and particularly subadditive, it provides an incentive to diversify the risks in a financial institution or a portfolio.
On the other hand, this is not the case for VaR, which discourages its use in risk aggregation.
Readers may refer to~\citet{emmer2015best} and~\citet{mcneil2015quantitative} for a comprehensive discussion on VaR and ES in financial risk management.

Capital allocation is a second stage for quantifying the contribution of each source of risk to total risk.
Reflecting the diversification benefit, the allocated capital to each component is calculated so that the sum of the allocated capital over all components equals the total capital.
Among various allocation rules proposed in the literature, the \emph{Euler principle} is a prevalent allocation rule justified from various perspectives, such as RORAC (return on risk-adjusted capital) compatibility~\citep{tasche1999risk}, cooperative game theory~\citep{denault2001coherent}, and axiomatic foundation of capital allocations~\citep{kalkbrener2005axiomatic}; see also~\citet{tasche2008capital} and references therein.
Together with the growing importance of ES in recent banking regulations, this paper focuses on the situation of capital allocation when total risk is measured by ES and allocated capital is calculated under the Euler principle; see Section~\ref{sec:allocation:ES}
 for details. We call the allocated capital calculated in this setting \emph{ES contribution (ESC)}.

The rare event nature of the above tail risk quantities poses challenges in their statistical estimation and model evaluation.
Forecast evaluation of tail risk quantities is called \emph{backtesting} in financial terminology.
Regarding these issues, \emph{elicitability} and \emph{identifiability} are important concepts of risk functionals, where elicitability allows for a comparative study of forecasting models, and identifiability offers an absolute criterion of forecast accuracy.
These notions are also beneficial for estimating and calibrating models of tail risk quantities; see~\citet{nolde2017elicitability}.
It is known that VaR is elicitable and identifiable by itself, whereas ES is elicitable and identifiable when it is simultaneously estimated with VaR~\citep{fissler2016higher}.
These results open the door to a semiparametric estimation of risk measures based on scoring functions; see, for example,~\citet{patton2019dynamic} and~\citet{taylor2019forecasting}.

In contrast to the case of risk measures, studies on backtesting of dynamic capital allocation are quite limited in the literature.
\citet{bielecki2020fair} propose a backtesting method of ESCs based on the concept called fairness.
This method is essentially treated in the more general framework of calibration and traditional backtests introduced in~\citet{nolde2017elicitability}.
Despite the usefulness of such traditional backtests for model validation, \citet{nolde2017elicitability} also point out several issues of such tests particularly in terms of model comparison and banking regulation.
To this end, our first aim in this paper is to present the comprehensive treatment of backtesting ESCs 
in the framework of traditional and comparative backtests~\citep{nolde2017elicitability} based on the concepts of \emph{joint identifiability} and \emph{multi-objective elicitability}~\citep{fissler2024backtesting}.

Although the results shown in~\citet{fissler2024backtesting} are on systemic risk measures such as CoVaR, CoES and MES, in Section~\ref{sec:elicitability:identifiability} we translate their results under the setting of capital allocation.
Summarizing their results in our setting, ESCs are jointly identifiable with total VaR (i.e., VaR of the aggregated loss); moreover, ESCs themselves are not elicitable, but are multi-objective elicitable, when combined with total VaR and with respect to the lexicographical order.
The last statement means more precisely that the tuple of true ESCs and total VaR is elicited as a unique minimizer of an expectation of some $\mathbb{R}^2$-valued scoring function with respect to the order on $\mathbb{R}^2$ such that the forecast evaluation of VaR is prioritized, and that of ESCs
is conducted secondarily.
Therefore, in the framework of a comparative backtest based on multi-objective elicitability it is necessary to forecast ESCs together with total VaR, and two forecasts of ESCs are comparable only when the corresponding forecasts of total VaR are equally accurate.
Section~\ref{sec:backtesting:dynamic:ES:cont} summarizes the framework of backtests of dynamic ESCs.

When conducting a comparative backtest in practice, it may not always be clear which scoring function to use.
In addition, the resulting forecast ranking can be sensitive to the choice of scoring functions~\citep{patton2020comparing}.
To overcome these problems, a diagnostic tool called the \emph{Murphy diagram} is explored in~\citet{ehm2016quantiles} and~\citet{ziegel2020robust} for robust forecast evaluation of VaR and ES, respectively, against the choice of scoring functions.
Based on a mixture representation of a scoring function, the Murphy diagram is beneficial for checking whether one forecast dominates another under a class of scoring functions.

As a second aim of this paper,  we complement the framework of backtesting ESCs by developing their Murphy diagrams.
They serve as a visual, intuitive way to compare competing forecasting models, thereby making it easier to digest and make informed decisions based on their performance. 
Section~\ref{sec:robust} introduces Murphy diagrams of ESCs and summarizes their properties.

Our last aim of the paper is to leverage the concept of multi-objective elicitability and introduce a novel semiparametric model of dynamic ESCs.
Section~\ref{sec:proposed:model} proposes such a model, which combines the joint semiparametric estimation of the pair of total VaR and total ES with the \emph{compositional regression} for modeling the dynamics of the proportions of total ES to each component of risk.
Based on multi-objective elicitability of ESCs, forecast accuracy of total VaR is first evaluated prior to that of ESCs.
In addition, risk allocation is conceptually a second stage in the risk management process after quantifying overall risk.
This theoretical and practical two-stage procedure motivates us to first model the dynamics of the pair of total VaR and total ES and then estimate ESCs through the proportion of each source of risk to total ES.
Since the component-wise sum of the vector of proportions must equal $1$, we model its dynamics by the \emph{compositional regression}, which is a multiple regression for such compositional data.

Note that~\citet{boonen2019forecasting} also apply compositional data analysis to the problem of capital allocation.
They first estimate allocated capital from the losses under the assumption of normality and then fit and evaluate compositional regression models by regarding the set of normalized allocated capital as compositional data.
One potential challenge of this procedure is that statistical error and model uncertainty reside not only in the compositional regression model, but also in the estimated allocated capital.
To overcome this issue, we fit and evaluate compositional regression models based on a scoring function of ESCs.
Multi-objective elicitability of ESCs allows us to quantify the accuracy of compositional regression models based on realized losses and not on estimated allocated capital.
The major advantage of our proposed semiparametric model is in the complete separation of modeling ESCs from that of total VaR and total ES.
This feature enables us to concentrate on modeling ESCs based on the existing model for the pair of total VaR and total ES and to assess the forecast accuracy of ESCs among the proposed compositional regression models and an existing one by equalizing their total VaR and total ES.

Section~\ref{sec:empirical} conducts an empirical analysis of a portfolio of stock returns and demonstrates superior performance of the proposed model compared with other models including those with conditional heteroskedastic volatilities, hysteretic effects, and time-varying correlations.
Section~\ref{sec:conclusion:outlook} discusses potential directions for future research. We defer all technical results to Section~\ref{sec:proof:details}, where references starting with the prefix ``S'' refer to the supplementary material.
We also present details and additional results of our empirical analysis in Section~\ref{sec:details:empirical}.

\section{Backtesting of ES contributions}\label{sec:backtesting}

\subsection{Gradient allocations of expected shortfall}\label{sec:allocation:ES}

Throughout the paper 
we fix an atomless probability space $(\Omega,{\cal A},\mathbb{P})$, where all random objects are defined.
For $p \in \{0\}\cup[1,\infty)$ and $d \in \N$, let ${\cal L}^p(\R^d)$ be the set of all $\R^d$-value random vectors on $(\Omega,{\cal F},\mathbb{P})$ whose components have a finite $p$th moment.
Let $F_{\bX}$ be the joint cumulative distribution function (cdf) of $\bX \in {\cal L}^0(\R^d)$.
We also denote by ${\cal F}^p(\R^d)=\{F_{\bX}: \bX \in {\cal L}^p(\R^d)\}$ and by ${\cal F}_\text{c}^p(\mathbb{R}^d)$ the class of cdfs $F \in {\cal F}^p(\R^d)$ with a strictly positive (Lebesgue) density for every $\bx\in \mathbb{R}^d$ such that $F(\bx)\in (0,1)$.
Analogously, ${\cal L}_\text{c}^p(\mathbb{R}^d)$ denotes the set of $\mathbb{R}^d$-valued random vectors whose cdf belongs to ${\cal F}_\text{c}^p(\mathbb{R}^d)$.
For $X\in {\cal L}^0(\R)$, VaR with a \emph{confidence level} $\alpha \in (0,1)$ is given by $\operatorname{VaR}_\alpha(X)=\inf\{x \in \R: F_X(x)\ge \alpha\}$.
Moreover, for $X\in {\cal L}^1(\R)$, ES with a confidence level $\alpha \in (0,1)$ is $\operatorname{ES}_\alpha(X)=(1/(1-\alpha))\int_\alpha^1 \operatorname{VaR}_\beta(X){\rm d}\beta$, which coincides with
$\operatorname{ES}_\alpha(X)=\mathbb{E}[X|X\ge \operatorname{VaR}_\alpha(X)]$ if $X\in \mathcal L_\text{c}^1(\mathbb{R})$.
The confidence level $\alpha$ is typically chosen to be close to $1$, such as $0.975$ and $0.99$.
Let $\bX=(X_1,\dots,X_d)^\top\in{\cal L}^0(\R^d)$ be a random vector standing for the collection of losses to a portfolio or a financial institution.
Moreover, let $S=X_1+\cdots+X_d$ be the total loss.
Our sign convention is that positive values are losses and negative values are profits.

Under the prevalent Euler principle, the contribution of each component $X_j$, $j=1\dots,d$, to the risk of the total loss $\operatorname{ES}_\alpha(S)$ is determined by the gradient
$(\partial/\partial \lambda_j)\operatorname{ES}_\alpha(\utwi{\lambda}^\top \bX)|_{\boldsymbol{\lambda}=\bone_d}$ provided that the partial derivative exists, where $\utwi{\lambda}=(\lambda_1,\dots,\lambda_d)^\top\in\mathbb{R}^d$ and  $\bone_d=(1,\dots,1)^\top\in \mathbb{R}^d$.
Given certain smoothness conditions~\citep{tasche1999risk}, this derivative leads to:
\begin{align}\label{eq:esc}
\operatorname{ESC}_\alpha(X_j,S)=\mathbb{E}[X_j|S\ge \operatorname{VaR}_\alpha(S)],
\end{align}
which we call ESC for the $j$th risk.
Note that ESC itself is well-defined for $(X_j,S)\in{\cal L}^1(\R^2)$ without smoothness assumptions.
If $S\in \mathcal L_\text{c}^1(\mathbb{R})$, then the following \emph{full allocation property (FAP)} holds:
\begin{align}\label{eq:full:allocation}
\sum_{j=1}^d  \operatorname{ESC}_\alpha(X_j,S) =\operatorname{ES}_\alpha(S),
\end{align}
for which the total capital $\operatorname{ES}_\alpha(S)$ is allocated to $d$ components by the $d$-tuple of allocated capital:
\begin{align*}
    \operatorname{ESC}_\alpha(\bX,S)=(\operatorname{ESC}_\alpha(X_1,S),\dots,\operatorname{ESC}_\alpha(X_d,S)).
\end{align*}
For a generic $k$-dimensional risk functional  $\boldsymbol{\varrho}\in \{\operatorname{VaR}_\alpha,\operatorname{ES}_\alpha,\operatorname{ESC}_\alpha\}^k$, we write $\boldsymbol{\varrho}(F)$ for $\boldsymbol{\varrho}(\bX)$, $\bX\sim F$, by the law-invariance of $\boldsymbol{\varrho}$.

\subsection{Multi-objective elicitability and joint identifiability of ES contributions}\label{sec:elicitability:identifiability}

We introduce several properties required for backtesting ESCs.
Let ${\cal F}\subseteq {\cal F}^0(\R^d)$, $k,\,m\in \mathbb{N}$, and $A\subseteq \mathbb{R}^k$.
A function $\mathbf{S}:A\times \mathbb{R}^d \rightarrow \R^m$ is called \emph{$\mathcal F$-integrable} if, for every $F\in \mathcal F$ and $\bm{r}\in A$, every component of the function $\bx\mapsto \mathbf{S}(\bm{r},\bx)$ is integrable with respect to $F$.
A functional $\boldsymbol{\varrho}:{\cal F} \rightarrow A$ is called \emph{multi-objective elicitable} on ${\cal F}$ with respect to a total order $\preceq$ on $\R^m$ if there exists an $\mathcal F$-integrable function $\mathbf{S}:A\times \mathbb{R}^d \rightarrow \R^m$ such that, for every $F\in{\cal F}$, $\boldsymbol{\varrho}(F)$ is the unique minimizer of $\bm{r}\mapsto \mathbb{E}[\mathbf{S}(\bm{r},\bX)]$, $\bX\sim F$, over $A$.
The function $\mathbf{S}$ is called a \emph{(strictly $\mathcal F$-consistent multi-objective) scoring function} for $\boldsymbol{\varrho}$.
We simply call $\boldsymbol{\varrho}$ elicitable when a scoring function can be taken with $m=1$.

A functional $\boldsymbol{\varrho}:{\cal F} \rightarrow A$ is called \emph{identifiable} on ${\cal F}$ if there exists an $\mathcal F$-integrable function $\mathbf{V}:A\times \mathbb{R}^d \rightarrow \R^m$ such that, for every $F\in{\cal F}$, $\boldsymbol{\varrho}(F)$ is the unique solution to the equation $\mathbb{E}[\mathbf{V}(\bm{r},\bX)]=\bzero_m$, $\bX\sim F$, in terms of $\bm{r}$ on $A$.
The function $\mathbf{V}$ is called a \emph{(strict $\mathcal F$-) identification function} for $\boldsymbol{\varrho}$.

For $j\in \{1,\dots,d\}$, the $j$th ESC~\eqref{eq:esc} coincides with the \emph{marginal expected shortfall (MES)} of $(S,X_j)^\top$ considered in~\citet{fissler2024backtesting} provided that $(S,X_j)^\top\in \mathcal F_\text{c}^1(\mathbb{R}^2)$.
This observation immediately yields the following results on the multi-objective elicitability and joint identifiability of ESCs shown in Theorem~4.2 and Theorem~S.3.1 of ~\citet{fissler2024backtesting}, respectively.
In the following, the \emph{lexicographic order} $\leq_{\text{lex}}$ is adopted as a total order on $\R^2$.
For every $(a_1,b_1),(a_2,b_2)\in \mathbb{R}^2$, we write $(a_1,b_1)\leq_{\text{lex}}(a_2,b_2)$  if $a_1 < a_2$ or if $(a_1=a_2\text{ and }b_1\le b_2)$.
Moreover, we define:
\begin{align}\label{eq:sets:candidates}
\tilde{\mathcal F}_{\text{c}}^1(\mathbb{R}^{d})=
\left\{F \in \mathcal F^1(\mathbb{R}^{d}): F_{X_j,S}\in  {\cal F}_\text{c}^1(\mathbb{R}^2)\text{ for }\bX\sim F\text{ and }S=\sum_{j=1}^d X_j\right\}.
\end{align}

\begin{proposition}\label{prop:elicitability:esc}
Let $\alpha \in (0,1)$ and $\mathcal F\subseteq \tilde{\mathcal F}_{\text{c}}^1(\mathbb{R}^{d})$.
    \begin{enumerate}
        \item[(S1)] For every $j\in\{1,\dots,d\}$, the pair $F_{\bX}\mapsto (\operatorname{ESC}_\alpha(F_{X_j,S}),\operatorname{VaR}_\alpha(F_S))$ is multi-objective elicitable on $\mathcal F$ with respect to $(\mathbb{R}^2,\le_{\text{lex}})$. A strictly $\mathcal F$-consistent multi-objective scoring function $\mathbf{S}_j:\mathbb{R}^2\times \mathbb{R}^{d} \rightarrow (\mathbb{R}^2,\le_{\text{lex}})$ is given by:
        \begin{align*}
\mathbf{S}_j((m_j,v),\bx)&=(\S^\text{VaR}(v,s),\S_j^\text{ESC}((m_j,v),\bx))^\top,\\
&    \S^\text{VaR}(v,s)=
\{\id{\{s\le v\}}-\alpha\}\{h(v)-h(s)\},\\
& \S_j^\text{ESC}((m_j,v),\bx)=\id{\{s > v\}} \left\{
\phi_j'(m_j)(m_j-x_j) - \phi_j(m_j)+\phi_j(x_j)
\right\},
\end{align*}
where $s=\sum_{i=1}^d x_i$, $h:\R\rightarrow \R$ is a strictly increasing function, and $\phi_j:\R\rightarrow \R$ is a strictly convex differentiable function with derivative $\phi_j'$ such that $\mathbf{S}_j$ is $\mathcal F$-integrable.
        \item[(S2)] The $(d+1)$-tuple $F_{\bX}\mapsto (\operatorname{ESC}_\alpha(F_{\bX,S}),\operatorname{VaR}_\alpha(F_S))$ is multi-objective elicitable on $\mathcal F$ with respect to $(\mathbb{R}^2,\le_{\text{lex}})$.
        A strictly $\mathcal F$-consistent multi-objective scoring function $\mathbf{S}:\mathbb{R}^{d+1}\times \mathbb{R}^{d} \rightarrow (\mathbb{R}^2,\le_{\text{lex}})$ is given by:
\begin{align*}
\mathbf{S}((\bm{m},v),\bx)=\left(\S^\text{VaR}(v,s),\sum_{j=1}^d\S_j^\text{ESC}((m_j,v),\bx)\right)^\top,
\end{align*}
where $\S^\text{VaR}$ and $\S_j^\text{ESC}$, $j=1,\dots,d$, are as defined in (S1).
    \end{enumerate}
\end{proposition}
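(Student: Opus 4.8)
The plan is to deduce both parts from the multi-objective elicitability of the marginal expected shortfall (MES) in Theorem~4.2 of \citet{fissler2023backtesting}, after recording the elementary identity $\operatorname{ESC}_\alpha(X_j,S)=\mathbb{E}[X_j\mid S\ge\operatorname{VaR}_\alpha(S)]=\mathbb{E}[X_j\mid S>\operatorname{VaR}_\alpha(S)]$, which holds because $F_{X_j,S}\in\mathcal F_\text{c}^1(\mathbb{R}^2)$ forces $F_S$ to be continuous; thus the pair in (S1) is exactly the pair consisting of the MES of $(S,X_j)$ and the Value-at-Risk of the conditioning variable $S$, and the stated integrability hypotheses on $h$ and $\phi_j$ are precisely what makes $\mathbf{S}_j$ an $\mathcal F$-integrable scoring function. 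I also sketch a self-contained verification of (S1), since (S2) reuses its ingredients.

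Writing $\bm{r}=(m_j,v)$, the first component $v\mapsto\mathbb{E}[\S^\text{VaR}(v,S)]$ is, by the classical characterization of consistent scoring functions for quantiles together with the strictly positive density of $F_S$, uniquely minimized at $v^*:=\operatorname{VaR}_\alpha(F_S)$. Fixing $v=v^*$, a direct computation gives $\mathbb{E}[\S_j^\text{ESC}((m_j,v^*),\bX)]=(1-\alpha)\,\mathbb{E}[\phi_j(X_j)-\phi_j(m_j)-\phi_j'(m_j)(X_j-m_j)\mid S>v^*]$, which is $(1-\alpha)$ times an expected Bregman divergence generated by the strictly convex $\phi_j$ and is therefore uniquely minimized over $m_j$ at $\mathbb{E}[X_j\mid S>v^*]=\operatorname{ESC}_\alpha(X_j,S)$. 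Assembling the two steps under $\le_\text{lex}$ — here it is essential that $v^*$ is the \emph{unique} minimizer of the first component, so that equality in the first coordinate already forces $v=v^*$ and moves the comparison to the second coordinate — shows that $(\operatorname{ESC}_\alpha(X_j,S),v^*)$ is the unique $\le_\text{lex}$-minimizer of $\bm{r}\mapsto\mathbb{E}[\mathbf{S}_j(\bm{r},\bX)]$, which is (S1).

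For (S2) I would re-run the same scheme with $\bm{r}=(\bm{m},v)$. The first component of $\mathbf{S}$ is again $\S^\text{VaR}(v,s)$, uniquely minimized in $v$ at $v^*$. Fixing $v=v^*$, the expectation of the second component equals $\sum_{j=1}^d\mathbb{E}[\S_j^\text{ESC}((m_j,v^*),\bX)]$, and since the $j$th summand depends on $\bm{m}$ only through $m_j$, the minimization over $\bm{m}\in\mathbb{R}^d$ separates into $d$ copies of the one-dimensional Bregman problem above, uniquely solved by $m_j=\operatorname{ESC}_\alpha(X_j,S)$. Hence $(\operatorname{ESC}_\alpha(\bX,S),v^*)$ is the unique $\le_\text{lex}$-minimizer, and $\mathcal F$-integrability of $\mathbf{S}$ follows from that of each $\mathbf{S}_j$; in particular no new scoring function has to be constructed.

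Given the cited MES result the argument is largely routine, and the two points that deserve attention are: (i) checking that $F_S$ inherits from $F_{X_j,S}\in\mathcal F_\text{c}^1(\mathbb{R}^2)$ the continuity and strictly positive density needed for the quantile minimizer to be \emph{unique}, since it is this uniqueness that makes the lexicographic tie-breaking land on the fiber $\{v=v^*\}$; and (ii) the observation — immediate once noticed — that the additive separability of $\sum_{j=1}^d\S_j^\text{ESC}$ in $(m_1,\dots,m_d)$ is exactly what lifts the single-pair result to the $(d+1)$-tuple. I expect (i) to be the only mildly delicate ingredient, and it is in any case already encoded in the definition of $\tilde{\mathcal F}_\text{c}^1(\mathbb{R}^d)$.
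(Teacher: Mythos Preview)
Your proposal is correct and follows essentially the same route as the paper: the paper does not give a standalone proof but simply observes that $\operatorname{ESC}_\alpha(X_j,S)$ coincides with the MES of $(S,X_j)$ in \citet{fissler2023backtesting} whenever $(S,X_j)\in\mathcal F_\text{c}^1(\mathbb{R}^2)$, and then invokes Theorem~4.2 there (noting in a remark that the stated scoring functions are a specialization of that theorem's general form). Your self-contained verification via the two-step lexicographic minimization and the Bregman-divergence argument is a welcome addition that the paper omits, and your points (i) and (ii) correctly isolate the only places where care is needed.
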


\begin{proposition}\label{prop:joint:identifiability:esc}
Let $\alpha \in (0,1)$ and $\mathcal F\subseteq \tilde{\mathcal F}_{\text{c}}^1(\mathbb{R}^{d})$.
    \begin{enumerate}
        \item[(V1)] For every $j\in\{1,\dots,d\}$, the pair $F_{\bX}\mapsto (\operatorname{ESC}_\alpha(F_{X_j,S}),\operatorname{VaR}_\alpha(F_S))$ is identifiable on $\mathcal F$ with a strict $\mathcal F$-identification function $\mathbf{V}_j:\mathbb{R}^2\times \mathbb{R}^d \rightarrow \mathbb{R}^2$ given by:
        \begin{align*}
\mathbf{V}_j((m_j,v),\bx)&=(\V^\text{VaR}(v,s),\V_j^\text{ESC}((m_j,v),\bx))^\top,\\
&    \V^\text{VaR}(v,s)=\alpha-\id{\{s\le v\}},\\
& \V_j^\text{ESC}((m_j,v),\bx)=\id{\{s > v\}}
(x_j-m_j).
\end{align*}
        \item[(V2)] The $(d+1)$-tuple $F_{\bX}\mapsto (\operatorname{ESC}_\alpha(F_{\bX,S}),\operatorname{VaR}_\alpha(F_S))$ is identifiable on $\mathcal F$ with a strict $\mathcal F$-identification function $\mathbf{V}:\mathbb{R}^{d+1}\times \mathbb{R}^{d} \rightarrow \mathbb{R}^{d+1}$ given by:
        \begin{align*}
\mathbf{V}((\bm{m},v),\bx)&=(\V^\text{VaR}(v,s),\V_1^\text{ESC}((m_1,v),\bx),\dots,\V_d^\text{ESC}((m_d,v),\bx))^\top,
\end{align*}
where $\V^\text{VaR}$ and $\V_j^\text{ESC}$, $j=1,\dots,d$, are as defined in (V1).
    \end{enumerate}
\end{proposition}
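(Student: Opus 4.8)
The plan is to verify both claims directly from the definition of identifiability, exploiting a triangular structure of the defining equation $\mathbb{E}[\mathbf{V}((\bm{m},v),\bX)]=\bzero$: its first coordinate involves $v$ alone and will pin down $v=\operatorname{VaR}_\alpha(F_S)$, after which each remaining coordinate is affine in a single $m_j$ and pins it down. Equivalently, since $\operatorname{ESC}_\alpha(X_j,S)=\mathbb{E}[X_j\mid S\ge\operatorname{VaR}_\alpha(S)]$ is the marginal expected shortfall of the pair $(S,X_j)$ whenever $(X_j,S)\in\mathcal{F}_\text{c}^1(\mathbb{R}^2)$, part (V1) is a transcription of the MES identification result in Theorem~S.3.1 of \citet{fissler2023backtesting}; I would nonetheless record the short self-contained argument below, which also dispatches (V2).

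For (V1), fix $j$ and $F\in\mathcal{F}$ with $\bX\sim F$ and $S=\sum_i X_i$. First, $\mathbf{V}_j$ is $\mathcal{F}$-integrable: $\V^\text{VaR}$ is bounded, and $|\V_j^\text{ESC}((m_j,v),\bx)|\le|x_j|+|m_j|$ with $\mathbb{E}|X_j|<\infty$ because $F\in\mathcal{F}^1(\mathbb{R}^d)$. Next, $\mathbb{E}[\V^\text{VaR}(v,S)]=\alpha-F_S(v)$; because $F_{X_j,S}\in\mathcal{F}_\text{c}^1(\mathbb{R}^2)$, its $S$-marginal admits a Lebesgue density that is strictly positive on $\{s:F_S(s)\in(0,1)\}$, so $F_S$ is continuous and strictly increasing there and $F_S(v)=\alpha$ has the unique root $v=\operatorname{VaR}_\alpha(F_S)$. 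At this $v$ one has $\mathbb{P}(S=v)=0$ and $\mathbb{P}(S>v)=1-\alpha>0$, and $\mathbb{E}[\V_j^\text{ESC}((m_j,v),\bX)]=\mathbb{E}[X_j\id\{S>v\}]-m_j(1-\alpha)$ vanishes if and only if $m_j=\mathbb{E}[X_j\id\{S>v\}]/(1-\alpha)=\mathbb{E}[X_j\mid S\ge\operatorname{VaR}_\alpha(S)]=\operatorname{ESC}_\alpha(X_j,S)$, the middle equality using $\mathbb{P}(S=v)=0$. Hence $(\operatorname{ESC}_\alpha(F_{X_j,S}),\operatorname{VaR}_\alpha(F_S))$ is the unique zero of $\bm{r}\mapsto\mathbb{E}[\mathbf{V}_j(\bm{r},\bX)]$.

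For (V2) the reasoning is the same on a $(d+1)$-dimensional system. Its first coordinate is again $\alpha-F_S(v)$, forcing $v=\operatorname{VaR}_\alpha(F_S)$ exactly as above; the $j$th of the remaining $d$ coordinates is $\mathbb{E}[X_j\id\{S>v\}]-m_j(1-\alpha)$, and with $v$ already fixed each forces $m_j=\operatorname{ESC}_\alpha(X_j,S)$ independently across $j$. Integrability of $\mathbf{V}$ follows coordinatewise as before. So $(\operatorname{ESC}_\alpha(F_{\bX,S}),\operatorname{VaR}_\alpha(F_S))$ is the unique solution of $\mathbb{E}[\mathbf{V}((\bm{m},v),\bX)]=\bzero_{d+1}$, which is (V2).

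The one step needing care is the regularity bookkeeping at the start: one must confirm that strict positivity of the joint density of $(X_j,S)$ passes to the marginal density of $S$ on the range where $F_S\in(0,1)$, so that $\operatorname{VaR}_\alpha(F_S)$ is the \emph{unique} root of $F_S(v)=\alpha$, and that continuity of $F_S$ there permits freely replacing $\{S\ge v\}$ by $\{S>v\}$ when identifying the solution with $\operatorname{ESC}_\alpha(X_j,S)$. Everything downstream is just linearity of the expectation in $m_j$; the substantive content already resides in \citet{fissler2023backtesting} via the MES identification, and what is new here is only the identity $\operatorname{MES}_\alpha(S,X_j)=\operatorname{ESC}_\alpha(X_j,S)$ together with its consequence for the full $(d+1)$-tuple.
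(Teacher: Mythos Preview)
Your proposal is correct and mirrors the paper's approach: the paper does not supply an independent proof of this proposition but simply notes that $\operatorname{ESC}_\alpha(X_j,S)$ coincides with the MES of $(S,X_j)$ and invokes Theorem~S.3.1 of \citet{fissler2023backtesting}, which is exactly the reduction you identify. Your additional self-contained verification (the triangular solve for $v$ and then each $m_j$) goes beyond what the paper records, but is consistent with it and sound; the only delicate point---that $F_{X_j,S}\in\mathcal{F}_\text{c}^1(\mathbb{R}^2)$ forces the $S$-marginal to have a strictly positive density on $\{s:F_S(s)\in(0,1)\}$---you correctly flag and it does hold.
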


\begin{remark}
To simplify the notation and subsequent discussion, in Proposition~\ref{prop:elicitability:esc} we specialize the form of scoring function originally obtained in Theorem~4.2 of ~\citet{fissler2024backtesting}.
For instance, we take a specific auxiliary function in $\S^{\text{VaR}}$ so that $\S^{\text{VaR}}$ is independent of $x$ and equal to the scoring function of VaR presented in Equation~(5) of \citet{ehm2016quantiles}.
\end{remark}

As a concrete example of the scoring functions in Proposition~\ref{prop:elicitability:esc}, the choice $h(s)=s$ in $\S^\text{VaR}$ yields the well-known \emph{pinball loss} $\S^\text{VaR}(v,s)=(\id{\{s\le v\}}-\alpha)(v-s)$, and $\phi_j(x)=x^2$ in $\S_j^\text{ESC}$ leads to the \emph{squared loss} $\S_j^\text{ESC}((m_j,v),\bx)=\id\{s > v\}(x_j-m_j)^2$;~see~\citet{nolde2017elicitability}~and~\citet{fissler2024backtesting} for more examples.
Finally, we defer some technical discussion on the scoring and identification functions in Section~\ref{sec:order:sensitivity}.

\subsection{Backtesting dynamic ES contributions}\label{sec:backtesting:dynamic:ES:cont}

We introduce a setting for estimating dynamic ESCs.
Let $\{\bX_t\}_{t \in \N}$, $\bX_t=(X_{1,t},\dots,X_{d,t})^\top$, be a series of losses of interest, which is adapted to the filtration ${\cal G}=\{{\cal G}_t\}_{t\in \N}$.
For $S_t=X_{1,t}+\cdots +X_{d,t}$, let $VaR_t$ and $ES_t$ be VaR and ES of $S_{t}|{\cal G}_{t-1}$, respectively, with the prescribed confidence level $\alpha$.
For $j=1,\dots,d$, the $j$th ESC of $\bX_t$ given ${\cal G}_{t-1}$ is denoted by $ESC_{j,t}$.

We introduce a generic notation $\utwi{\varrho}_t$ for the pair $(ESC_{j,t},VaR_t)$ for a fixed $j\in \{1,\dots,d\}$, or the $(d+1)$-tuple $(ESC_{1,t},\dots,ESC_{d,t},VaR_t)$.
Let $\{\hat{\utwi{\varrho}}_t\}_{t\in \mathbb{N}}$ and $\{\hat{\utwi{\varrho}}_t^\ast\}_{t\in \mathbb{N}}$ be two series of ($\mathcal G$-predictable) forecasts of $\{\utwi{\varrho}_t\}_{t\in \mathbb{N}}$.
We assume that $\tilde{\mathcal F}_{\text{c}}^1(\mathbb{R}^{d+1})$ in~\eqref{eq:sets:candidates} contains all distributions of $(\bX_t,S_t)$ and $(\bX_t,S_t)|\mathcal G_{t-1}$ almost surely.
The multi-objective elicitability and identifiability presented in Section~\ref{sec:elicitability:identifiability} enable the conduct of \emph{comparative backtests} and \emph{traditional backtests (calibration tests)} based on the statistics
$\mathbf{\bar S}(\hat{\utwi{\varrho}}) = (1/|\mathbb{T}_{\text{out}}|)\sum_{t\in\mathbb{T}_{\text{out}}}\mathbf{S}(\hat{\utwi{\varrho}}_t,\bX_t)$ and 
$\mathbf{\bar V}(\hat{\utwi{\varrho}}) = (1/|\mathbb{T}_{\text{out}}|)\sum_{t\in\mathbb{T}_{\text{out}}}\mathbf{V}(\hat{\utwi{\varrho}}_t,\bX_t)$, respectively,
with $\mathbf{\bar S}(\hat{\utwi{\varrho}}^\ast)$ and $\mathbf{\bar V}(\hat{\utwi{\varrho}}^\ast)$ defined analogously, where $\mathbf{S}$ is a multi-objective scoring function for $\utwi{\varrho}$, $\mathbf{V}$ is an identification function for $\utwi{\varrho}$, and $\mathbb{T}_{\text{out}}$ is the \emph{out-of-sample period} $\mathbb{T}_{\text{out}}$ such that we forecast and backtest $\{\utwi{\varrho}_t\}_{t\in \mathbb{T}_{\text{out}}}$.
In practice, we divide a given finite sample period $\mathbb{T} = \{1,\dots,n+T\}$ into the \emph{in-sample period} $\mathbb{T}_{\text{in}} =\{1,\dots,n\}$ and the out-of-sample period $\mathbb{T}_{\text{out}} =\{n+1,\dots,n+T\}$.
For each $t\in \mathbb{T}_{\text{out}}$, we forecast $\utwi{\varrho}_t$ at $t-1$ based on the past $n$ observations of $\{\bX_{t-s}\}_{s \in \mathbb{T}_{\text{in}}}$.

Comparative backtests of ESCs concern whether some order between $\mathbf{\bar S}(\hat{\utwi{\varrho}})$ and $\mathbf{\bar S}(\hat{\utwi{\varrho}}^\ast)$ is statistically supported, and \emph{one-step} and \emph{two-step approaches} are proposed due to multi-objective elicitability with respect to the lexicographical order; see Section~5 of \citet{fissler2024backtesting} for details.
In traditional backtests of ESCs, we seek statistical evidence on the signs of $\mathbf{\bar V}(\hat{\utwi{\varrho}})$ and $\mathbf{\bar V}(\hat{\utwi{\varrho}}^\ast)$.
We refer the reader to Section 2.2 of \citet{nolde2017elicitability} for details.

\section{Robust forecast evaluation of ES contributions}\label{sec:robust}

To conduct a comparative backtest as presented in Section~\ref{sec:backtesting:dynamic:ES:cont}, we select a (multi-objective) scoring function from the class of functions presented in Proposition~\ref{prop:elicitability:esc}.
This is also required when we estimate models via score minimization, which we will consider in Section~\ref{sec:proposed:model}.
Such a dependence of the risk measurement procedure on the choice of scoring functions complicates the fair evaluation of forecasts.
\citet{patton2020comparing} also shows that forecast rankings can be sensitive to the choice of scoring function.

To overcome this issue,~\citet{ehm2016quantiles} propose a diagnostic tool for evaluating multiple forecasts called the \emph{Murphy diagram}, which is based on a mixture representation of a relevant class of scoring functions.
According to Theorem~1 of~\citet{ehm2016quantiles}, $\S^{\text{VaR}}$ in Proposition~\ref{prop:elicitability:esc} admits the mixture representation $\S^\text{VaR}(v,s)=\int_{\R}\S_\eta^\text{VaR}(v,s)\mathrm{d}H(\eta)$ for a non-negative measure $H$, where $\S_\eta^\text{VaR}:\mathbb{R}\times \mathbb{R} \rightarrow \mathbb{R}$, defined by:
\begin{align}\label{eq:var:elementary}
\S_\eta^\text{VaR}(v,s)=(\id{\{s <v\}}-\alpha)(\id{\{\eta <v\}}-\id{\{\eta <s\}}),\quad \eta\in\R,
\end{align}
is called an \emph{elementary scoring function} for VaR.
For competing forecasts $\{\widehat{VaR}_t^{(\ell)}\}_{t\in\mathbb{T}_{\text{out}}}$, $\ell=1,\dots,L$, of VaRs, a Murphy diagram displays the curves
 $\eta \mapsto (1/|\mathbb{T}_{\text{out}}|)\sum_{t\in\mathbb{T}_{\text{out}}}\S_\eta^{\text{VaR}}\left(\widehat{VaR}_t^{(\ell)},s_t\right)$,  $\ell=1,\dots,L$,
against $\eta$.
The diagram is negatively oriented in the sense that  forecasts with lower curves are evaluated to be more accurate.

The next proposition provides mixture representations of the scoring functions for ESCs presented in Proposition~\ref{prop:elicitability:esc}.

\begin{proposition}\label{prop:esc:mixture}
\begin{enumerate}
\item[(M1)]
Fix $j\in\{1,\dots,d\}$.
The scoring function $\S_j^\text{ESC}$ in Proposition~\ref{prop:elicitability:esc} (S1) admits the mixture representation:
\begin{align}\label{eq:esc:mixture}
\S_j^\text{ESC}((m_j,v),\bx)=\int_{\R}\S_{j,\eta}^\text{ESC}((m_j,v),\bx)\,\mathrm{d}H_j(\eta),
\end{align}
where $H_j$ is a non-negative measure satisfying
$\mathrm{d}H_j(\eta)=\mathrm{d}\phi_j'(\eta)$, $\eta\in\mathbb{R}$, and $\S_{j,\eta}^\text{ESC}:\mathbb{R}^2\times \mathbb{R}^d\rightarrow \mathbb{R}$ is defined by:
\begin{align}\label{eq:esc:elementary}
    \S_{j,\eta}^\text{ESC}((m_j,v),\bx)=\begin{cases}
\id{\{s>v\}}(x_j-\eta),&\text{ if }m_j\le\eta<x_j,\\
\id{\{s>v\}}(\eta-x_j),&\text{ if }x_j\le\eta<m_j,\\
0,&\text{ otherwise}.\\
\end{cases}
\end{align}
\item[(M2)] The scoring function $\S^\text{ESC}$ in Proposition~\ref{prop:elicitability:esc} (S2), with $\phi_j=:\phi$ for $j=1,\dots,d$, admits the mixture representation:
\begin{align}\label{eq:esc:mixture:all}
\S^\text{ESC}((\bm{m},v),\bx)=\int_{\R}\S_{\eta}^\text{ESC}((\bm{m},v),\bx)\,\mathrm{d}H(\eta),
\end{align}
where $H$ is a non-negative measure satisfying
$\mathrm{d}H(\eta)=\mathrm{d}\phi'(\eta)$, $\eta\in\mathbb{R}$, and $\S_{\eta}^\text{ESC}:\mathbb{R}^2\times \mathbb{R}^d\rightarrow \mathbb{R}$ is defined by:
\begin{align*}
    \S_{\eta}^\text{ESC}((\bm{m},v),\bx)=\sum_{j=1}^d \S_{j,\eta}^\text{ESC}((m_j,v),\bx).
    \end{align*}
\end{enumerate}
\end{proposition}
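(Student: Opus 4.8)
The plan is to establish (M1) by reducing it to the classical Savage/Bregman-type mixture representation of a Bregman divergence, and then to deduce (M2) from (M1) by linearity. For (M1), note that the factor $\id\{s>v\}$ occurs in both $\S_j^{\text{ESC}}$ and in every branch of~\eqref{eq:esc:elementary} and is constant in the mixing variable $\eta$; hence it suffices to prove, for all $(m_j,x_j)\in\R^2$, the scalar identity
\begin{equation*}
\phi_j'(m_j)(m_j-x_j)-\phi_j(m_j)+\phi_j(x_j)=\int_{\R}\psi_\eta(m_j,x_j)\,\mathrm dH_j(\eta),
\end{equation*}
where $\psi_\eta(m_j,x_j)=(x_j-\eta)\id\{m_j\le\eta<x_j\}+(\eta-x_j)\id\{x_j\le\eta<m_j\}$ is the ``tent'' weight from~\eqref{eq:esc:elementary} and $\mathrm dH_j=\mathrm d\phi_j'$. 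The left-hand side is precisely the Bregman divergence generated by the convex function $\phi_j$; in particular it is finite and nonnegative. Since also $\psi_\eta\ge 0$ pointwise and $\mathrm dH_j$ is a nonnegative Radon measure (because $\phi_j$ is convex, so $\phi_j'$ is nondecreasing), every integral below is well defined in $[0,\infty]$ and the order of integration may be interchanged freely by Tonelli's theorem, so no integrability hypothesis is needed for the identity itself.

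To prove the scalar identity I would argue by cases, reducing everything to the absolute continuity of convex functions. Suppose $m_j<x_j$. Convexity makes $\phi_j$ locally Lipschitz, hence locally absolutely continuous, so $\phi_j(x_j)-\phi_j(m_j)=\int_{m_j}^{x_j}\phi_j'(t)\,\mathrm dt$ and the left-hand side equals $\int_{m_j}^{x_j}\bigl(\phi_j'(t)-\phi_j'(m_j)\bigr)\,\mathrm dt$. Writing the increment of the monotone function $\phi_j'$ as a mass of $H_j$, namely $\phi_j'(t)-\phi_j'(m_j)=H_j\bigl([m_j,t)\bigr)=\int_{\R}\id\{m_j\le\eta<t\}\,\mathrm dH_j(\eta)$, and interchanging the two integrals over $\{(t,\eta):m_j\le\eta<t\le x_j\}$, one obtains $\int_{[m_j,x_j)}(x_j-\eta)\,\mathrm dH_j(\eta)=\int_\R\psi_\eta(m_j,x_j)\,\mathrm dH_j(\eta)$. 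The case $x_j<m_j$ is symmetric: the left-hand side becomes $\int_{x_j}^{m_j}\bigl(\phi_j'(m_j)-\phi_j'(t)\bigr)\,\mathrm dt$, and the same interchange over $\{(t,\eta):x_j\le t\le\eta<m_j\}$ yields $\int_{[x_j,m_j)}(\eta-x_j)\,\mathrm dH_j(\eta)$. The case $m_j=x_j$ is trivial since both sides vanish. Reinstating the factor $\id\{s>v\}$ gives~\eqref{eq:esc:mixture}.

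The only delicate point — and the place where I expect to have to be careful rather than clever — is the bookkeeping at kinks of $\phi_j$, i.e.\ at atoms of $\mathrm dH_j$. The half-open intervals in~\eqref{eq:esc:elementary} (closed at $\min(m_j,x_j)$, open at $\max(m_j,x_j)$) together with the appearance of $\phi_j'(m_j)$ in $\S_j^{\text{ESC}}$ are mutually consistent precisely when one fixes the \emph{left-continuous} version of the subgradient $\phi_j'$, so that $H_j([a,b))=\phi_j'(b)-\phi_j'(a)$; this is exactly what makes the interchange in the previous paragraph produce the stated branches, and when $\phi_j$ is differentiable the issue disappears entirely. As a sanity check, taking $\phi_j(x)=x^2$ gives $\mathrm dH_j=2\,\mathrm d\eta$ and $\psi_\eta(m_j,x_j)=|x_j-\eta|$ on $[\min(m_j,x_j),\max(m_j,x_j))$, and $\int_\R\psi_\eta\,\mathrm dH_j=(x_j-m_j)^2$, recovering the square loss $\id\{s>v\}(x_j-m_j)^2$ noted after Proposition~\ref{prop:joint:identifiability:esc}, which is the $\{s>v\}$-restricted analogue of the mixture of \citet{ehm2016quantiles} for the mean functional.

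Finally, (M2) is immediate from (M1). Under $\phi_1=\dots=\phi_d=\phi$ the measures coincide, $H_1=\dots=H_d=H$ with $\mathrm dH=\mathrm d\phi'$, so summing the representation~\eqref{eq:esc:mixture} over $j=1,\dots,d$ and pulling the finite sum inside the integral (legitimate since all summands are nonnegative) gives $\S^{\text{ESC}}((\bm m,v),\bx)=\sum_{j=1}^d\int_\R \S_{j,\eta}^{\text{ESC}}((m_j,v),\bx)\,\mathrm dH(\eta)=\int_\R\Bigl(\sum_{j=1}^d \S_{j,\eta}^{\text{ESC}}((m_j,v),\bx)\Bigr)\mathrm dH(\eta)=\int_\R \S_\eta^{\text{ESC}}((\bm m,v),\bx)\,\mathrm dH(\eta)$, which is~\eqref{eq:esc:mixture:all}.
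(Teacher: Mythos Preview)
Your proposal is correct and follows essentially the same route as the paper's own proof: both factor out the indicator $\id\{s>v\}$, recognise the remaining expression as the Bregman divergence $\Phi_j(m_j,x_j)=\phi_j(x_j)-\phi_j(m_j)-\phi_j'(m_j)(x_j-m_j)$, establish the mixture identity for $\Phi_j$ by the three cases $m_j<x_j$, $m_j>x_j$, $m_j=x_j$, and then deduce (M2) from (M1) by summation. The only difference is that the paper appeals to the proof of Theorem~1 in \citet{ehm2016quantiles} for the key identity $\Phi_j(m,x)=\int_m^x (x-\eta)\,\mathrm d\phi_j'(\eta)$, whereas you supply it directly via absolute continuity of $\phi_j$ and a Tonelli interchange, and you add the discussion of the left-continuous subgradient convention at atoms of $\mathrm dH_j$; this makes your argument more self-contained but not genuinely different.
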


\begin{remark}\label{rem:interpretation}
Following Section~2.3 of~\citet{ehm2016quantiles}, the elementary scoring function~\eqref{eq:esc:elementary} can be interpreted as a degree of regret for the $j$th branch of a company who has a fixed capital $\eta$ to cover a future loss in distress, whose point forecast is $m_j$; see Section~\ref{sec:interpretation} for details.
\end{remark}

Since $\phi_j$ and $\phi$ in (M1) and (M2) above are strictly convex, the corresponding measures $H_j$ and $H$ assign positive mass to any finite interval on $\mathbb{R}$.
Note that $\mathrm{d}H(x)=2\mathrm{d}x$ for the squared loss, and $\S_{j,\eta}^{\text{ESC}}$ arises from  $\S_{j}^{\text{ESC}}$ in Proposition~\ref{prop:elicitability:esc} (S1) by taking $\phi_j(x)=(x-\eta)_{+}$ although this function is not strictly convex.

Based on the mixture representations~\eqref{eq:esc:mixture} and~\eqref{eq:esc:mixture:all}, a Murphy diagram of ESCs can be drawn analogously to the case of VaR.
For $l\in\{1,\dots,L\}$, let $\{\hat{\utwi{\varrho}}_t^{(\ell)}\}_{t\in \mathbb{T}_{\text{out}}}$, $\hat{\utwi{\varrho}}_t=(\widehat{ESC}_{1,t}^{(\ell)},\dots,\widehat{ESC}_{d,t}^{(\ell)},\widehat{VaR}_t^{(\ell)})$, be a series of predictions in the setting of Section~\ref{sec:backtesting:dynamic:ES:cont}.
The Murphy diagram of the $j$th ESC displays the curves.
\begin{align*}
\eta \mapsto \bar \S_{j,\eta}^{\text{ESC}}(l):=\frac{1}{|\mathbb{T}_{\text{out}}|}\sum_{t\in\mathbb{T}_{\text{out}}}\S_{j,\eta}^{\text{ESC}}((\widehat{ESC}_{j,t}^{(\ell)},\widehat{VaR}_t^{(\ell)}),\bx_t), \quad \ell=1,\dots,L.
\end{align*}

The Murphy diagram of the $d$-tuple of ESCs analogously exhibits
$\eta \mapsto \bar \S_{\eta}^{\text{ESC}}(l):=\sum_{j=1}^d \bar \S_{j,\eta}^{\text{ESC}}(l)$ for $\ell=1,\dots,L$.
To obtain these curves and their differences on the whole real line, it suffices to evaluate them on a finite set because $\bar \S_{j,\eta}^{\text{ESC}}(l)$ and $\bar \S_{\eta}^{\text{ESC}}(l)$ are piecewise linear functions of $\eta$ with all kinks and jump points contained in $\mathcal D_j^{(l)}=\{x_{j,t},\widehat{ESC}_{j,t}^{(\ell)},t\in \mathbb{T}_{\text{out}}\}$ and $\mathcal D^{(l)}=\bigcup_{j=1}^d \mathcal D_j^{(l)}$, and $\bar \S_{j,\eta}^{\text{ESC}}(l)$ and $\bar \S_{\eta}^{\text{ESC}}(l)$ vanish outside of $[\min(\mathcal D_j^{(l)}),\max(\mathcal D_j^{(l)})]$ and $[\min(\mathcal D^{(l)}),\max(\mathcal D^{(l)})]$, respectively.

Note that Murphy diagrams of ESCs depend on the corresponding forecasts of total VaR.
Consequently, we can conduct a completely fair evaluation among multiple forecasts of ESCs when they share the same forecasts of total VaR.
In such a case, more advanced analyses, such as forecast dominance tests proposed by~\citet{ziegel2020robust}, are available.

\section{Proposed models for estimating dynamic ES contributions}\label{sec:proposed:model}

In previous sections we have mentioned that clear and rigid comparison among forecasting models of ESCs is feasible among those with common forecasts of total VaRs.
In this section we exploit this feature and propose new models based on compositinal regression.

According to the evaluation criterion based on multi-objective elicitability of ESCs, forecast accuracy of total VaR is of utmost importance since estimated ESCs are compared only for models estimating total VaRs with equal accuracy.
As considered in \citet{dimitriadis2023dynamic}, this valuation principle naturally encourages a two-stage approach, where we first model the dynamics of $\{VaR_t\}$, possibly induced from the dynamics of $\{S_t\}$, and then consider the dynamics of ESCs, which may also be induced from the dynamics of $\{(X_j,S_t)\}$ for $j=1,\dots,d$.
We summarize benefits and limitations of such a \emph{top-down approach} in Section~\ref{sec:top:bottom}.
In this approach the dynamics of $\{ES_t\}$ can also be estimated together with $\{VaR_t\}$ in the first stage due to the natural order in the risk measurement procedure and, more importantly, joint elicitability of $(\operatorname{VaR}_\alpha,\operatorname{ES}_\alpha)$, where ES$_\alpha$ can be elicited only in combination with VaR$_\alpha$~\citep{fissler2016higher}.
An alternative approach is to specify the dynamics of $\{\bX_t\}$, which we call a \emph{bottom-up approach}.
This approach may be discouraged in terms of the two-stage forecast evaluation of ESCs since total VaR is modeled only indirectly;~see Section~\ref{sec:top:bottom} for more details of this approach.

Despite the appeals of the top-down approach in forecasting ESCs, special attention is required in this approach so that the empirical counterpart of FAP $\sum_{j=1}^d \widehat{ESC}_{j,t}=\widehat{ES}_{t}$ is satisfied.
To handle this constraint, we propose new models for estimating dynamic ESCs.
In what follows, we describe our proposed procedure for the case of one-step ahead forecast, where we use the losses $\{\bX_t\}_{t\in \mathbb{T}_{\text{in}}}$, $\mathbb{T}_{\text{in}}=\{1,\dots,n\}$, to forecast ESCs at time $n+1$.

The proposed model consists of two stages, where the dynamics of total ES is first estimated in combination with that of total VaR, and then ESCs are estimated
from the proportion of the total ES to each component of risk.
In the first stage, we estimate the dynamics of $(ES_t,VaR_t)$ for $t \in \mathbb{T}_{\text{in}}$ and then forecast $(VaR_{n+1},ES_{n+1})$ based on this estimated dynamics.
For this purpose, various models are proposed in the literature; see, for example,~\citet{patton2019dynamic},~\citet{taylor2019forecasting} and~\citet{taylor2022forecasting}.
Joint elicitability of the pair of risk measures $(\operatorname{VaR}_\alpha,\operatorname{ES}_\alpha)$ enables us to estimate the joint dynamics of this pair by score minimization; see, for example, \citet{taylor2019forecasting}.
In this paper we do not specify any specific model in this stage and instead select the best model based on the currently available information and target models for comparison.
Denote by $(\widehat{VaR}_t,\widehat{ES}_t)_{t\in\mathbb{T}}$, $\mathbb{T}=\{1,\dots,n+1\}$, the dynamics forecasted in this stage.

In the second stage, we allocate $\widehat{ES}_t$ to each component of risk with the weight vector $\bm{w}_t = (w_{1,t},\dots,w_{d,t})^\top\in\mathbb{R}^d$, $\sum_{j=1}^d w_{j,t}=1$, such that $ESC_{j,t}=w_{j,t}\times  ES_t$.
We assume that $ESC_{j,t}>0$ for all $j=1,\dots,d$ and $t\in\mathbb{T}$.
This assumption implies that $\bm{w}_t\in(0,1)^d$ by the \emph{diversifying property} $ESC_{j,t}\le ES_t$~\citep{kalkbrener2005axiomatic}. 
We further discuss this assumption and its relaxation in Remark~\ref{rem:negative:esc}.
Consequently, the weight vector $\bm{w}_t$ lies in $\mathcal S_{d}=\left\{\bm{w} \in (0,1)^d: \sum_{j=1}^d w_j= 1\right\}$.
Data on $\mathcal S_{d}$ are called \emph{compositional data}, and statistical modeling of such data has been extensively studied in the literature; see~\citet{aitchison1982statistical,aitchison2005compositional,pawlowsky2011compositional}; and references therein.
A common approach for modeling compositional data is first transforming the data to an unconstrained space to eliminate the sum constraint and then constructing a model on the unconstrained space.

One of the most prominent examples of such a transform is the \emph{isometric log-ratio transformation}~\citep[\emph{ilr,}][]{egozcue2003isometric} defined by:
\begin{align*}
    \operatorname{ilr}(\bm{w}) = V^\top \ln(\bm{w}) \in \mathbb{R}^{d-1},\quad \bm{w}\in \mathcal S_d,
\end{align*}
where $V\in \mathbb{R}^{d\times (d-1)}$ is a given matrix such that $V^\top V = I_{d-1}$, $V V^\top = I_d -\frac{1}{d}\bone_d\bone_d^\top$ and $V^\top \bone_d = \bzero_{d-1}$.
Note that the vector of equal weights $(1/d)\,\bone_d \in \mathcal S_d$ corresponds to the origin $\operatorname{ilr}((1/d)\,\bone_d)=\bzero_{d-1}\in\mathbb{R}^{d-1}$.
The matrix $V$ is called the \emph{contrast matrix}.
The map $\operatorname{ilr}$ is one-to-one, and its inverse map is given by:
\begin{align}\label{eq:closing}
    \operatorname{ilr}^{-1}(\bm{z})=\operatorname{C}(\exp(V \bm{z}))\in \mathcal S_d,\quad \bm{z}\in\mathbb{R}^{d-1},
\end{align}
where $\operatorname{C}:\mathbb{R}^d\rightarrow \mathcal S_d$ is the \emph{closing operator} defined by: 
\begin{align*}
\operatorname{C}(\bx)=\frac{\bx}{\bone_d^\top \bx},\quad \bx\in \mathbb{R}^d.
\end{align*}
Note that the map $\operatorname{ilr}$ depends on the choice of $V$, and we fix this matrix so that the resulting transform yields:
\begin{align*}
\operatorname{ilr}_k(\bm{w})=\sqrt{\frac{d-k}{d-k+1}}\ln\left(\frac{w_k}{\left(\prod_{l=k+1}^d w_l\right)^{1/(d-k)}}\right),
\quad k=1,\dots,d-1.
\end{align*}

With this choice of $V$, $\operatorname{ilr}_k(\bm{w})$, $k\in \{1,\dots,d-1\}$, can be interpreted as a normalized balance between the $k$th weight and the group of the weights over the assets $k+1,\dots,d$.
Note that $0$ and $1$ cannot be included in the compositional data for the above transformations to be well-defined.
This well-known limitation of ilr should not cause a significant problem in our analysis since $0$ or $1$ in the allocated weights corresponds to the practically irrelevant case when no capital or full capital is allocated to one asset, respectively.

We consider the following generic model on the dynamics of the allocation weights $\{\bm{w}_t\}_{t \in \mathbb{T}}$:
\begin{align}\label{eq:w:model}
\bm{w}_{t+1}&=\upsilon_{\utwi{\theta}}(\bm{w}_{s},\bX_s, s\le t)
:=
\operatorname{ilr}^{-1}\left(\utwi{\tau}+\utwi{\Phi}
\operatorname{ilr}(\bm{w}_{t})+
\utwi{\Psi} h(\bm{X}_s, s\le t)
\right),\quad t\in \mathbb{T},
\end{align}
where $h$ is a $\mathbb{R}^q$-valued function and $\utwi{\theta}=(\utwi{\tau},\utwi{\Phi},\utwi{\Psi})$ is a set of parameters on a parameter space $\Theta$ such that $\utwi{\tau}\in \mathbb{R}^{d-1}$, $\utwi{\Phi}\in \mathbb{R}^{(d-1)\times (d-1)}$, and $\utwi{\Psi}\in \mathbb{R}^{(d-1)\times q}$.
For brevity, we only regress $\bm{w}_{t+1}$ by $\bm{w}_t$ although one can choose more lagged variables as regressors.
The initial weight $\bm{w}_1$ can also be regarded as a parameter to be estimated or as an externally given constant.
We then estimate parameters in~\eqref{eq:w:model} by minimizing the average score of the $d$-tuple of ESCs in Proposition~\ref{prop:elicitability:esc} (S2) with $\phi_1=\cdots=\phi_d=:\phi$, $\phi(x)=x^2$:
\begin{align}\label{eq:theta:score}
\hat{\utwi{\theta}}=\operatorname{argmin}_{\utwi{\theta}\in \Theta}\sum_{t \in \mathbb{T}_{\text{in}}}\sum_{j=1}^d\left\{x_{j,t}-\upsilon_{\utwi{\theta}}(\bm{w}_s,\bx_s, s\le t)_j\widehat{ES}_t\right\}^2 \id{\left\{s_t > \widehat{VaR}_t\right\}},
\end{align}
where $\bx_t=(x_{1,t},\dots,x_{d,t})^\top$ and $s_t$ are the realizations of $\bX_t$ and $S_t$, respectively, and $\widehat{VaR}_t$ and $\widehat{ES}_t$ are the estimates given in the first stage.
Once the estimated parameter $\hat{\utwi{\theta}}$ is obtained, we forecast the ESCs at time $n+1$ by ${\widehat{ESC}}_{j,n+1}=\hat{w}_{j,n+1}\widehat{ES}_{n+1}$, $j=1,\dots,d$, where $\hat{\bm{w}}_{1}=\bm{w}_{1}$ and $\hat{\bm{w}}_{t+1}=\upsilon_{\hat{\utwi{\theta}}}(\hat{\bm{w}}_{s},\bX_s, s\le t)$, $t=1,\dots,n$.
Note that the estimated weight $\hat{\bm{w}}_{n+1}$ belongs to $\mathcal S_d$, which ensures the empirical counterpart of FAP: $\sum_{j=1}^d \widehat{ESC}_{j,t}=\widehat{ES}_{t}$.

\begin{remark}\label{rem:negative:esc}
If we assume that $ESC_{j,t}<ES_t$ instead of $ESC_{j,t}>0$, for all $j=1,\dots,d$ and $t\in\mathbb{T}$, then it holds that $-(d-2)\,ES_t < ESC_{j,t}$ by FAP.
Therefore, by defining the vector of the \emph{normalized weights} :
\begin{align*}
\tilde{\bm{w}}_{t}=(\tilde w_{1,t},\dots,\tilde w_{d,t})^\top\quad\text{where}\quad
\tilde w_{j,t}=\frac{ESC_{j,t}+(d-2)\,ES_t}{(d-1)^2\, ES_t},\quad j=1,\dots,d,
\end{align*}
we have that $\tilde{\bm{w}}_{t}\in \left(0,1/(d-1)\right)^d$ and that $\sum_{j=1}^d \tilde w_{j,t}=1$.
Since $\tilde{\bm{w}}_{t}$ belongs to a subset of $\mathcal S_d$, one could naively fit the model~\eqref{eq:w:model} with the range constraint $\tilde{\bm{w}}_{t}\in \left(0,1/(d-1)\right)^d$ possibly handled through penalization in the loss function in~\eqref{eq:theta:score}.
We do not explore this direction further in this study. 
Indeed, for the portfolio considered in Section~\ref{sec:empirical}, we observe that the forecasted ES and ESCs in the whole period are all positive.
Hence we adopt the assumption that $ESC_{j,t}>0$ for all $j=1,\dots,d$ and $t\in\mathbb{T}$.
\end{remark}
\section{Empirical study}\label{sec:empirical}

Various models in the literature are available to forecast ESCs.
In this section we compare them with our proposed model in Section~\ref{sec:proposed:model} in the traditional and comparative backtests presented in Section~\ref{sec:backtesting:dynamic:ES:cont} and based on the Murphy diagrams introduced in~Section~\ref{sec:robust}.
We first describe the setting and models in comparison in Section~\ref{sec:setting}.
We then show results of the forecast evaluation in Section~\ref{sec:evaluation}.
Finally, we discuss forecast accuracy of the compared models in Section~\ref{sec:discussion}.

\subsection{Setting and model description}\label{sec:setting}

We consider a portfolio consisting of the following $d=3$ stock prices with equal investment weights (which should not be confused with the ratios of ESCs to the total ES modeled in Section~\ref{sec:proposed:model}):
Amazon (AMZN),
Alphabet Class A (GOOGL), and Telsa (TSLA).
From 2010-06-30 to 2023-05-30, each series consists of $n+T=3249$ negative daily log returns multiplied by $100$.
We conduct a rolling window analysis with window size $n = 2249$ and forecast day-ahead ESCs for the last $T=1000$ observations.
Namely, for $t=n+1,\dots,n+T$, we forecast total VaR, total ES, and ESCs of $\bX_t|\mathcal G_{t-1}$ based on the past $n$ observations $\{\bX_{s}\}_{s\in \{t-n,\dots,t-1\}}$.
Following the \emph{Fundamental Review of the Trading Book (FRTB)}, we focus on the confidence levels $\alpha =0.975$~\citep{bcbs2013consultative}.

In this analysis we compare six models including the proposed one in Section~\ref{sec:proposed:model}.
We briefly introduce these models here and defer a detailed description to Section~\ref{sec:detailed:description}.
As a simple benckmark, our first model is the \emph{historical simulation (HS)} model, where at each time $t$ we estimate the risk functional $\utwi{\varrho}_t=(ESC_{1,t},\dots,ESC_{d,t},VaR_t,ES_t)$ nonparametrically based on the past $n$ observations.
Second, we consider what we call the \emph{bottom-up GARCH (GARCH.BU)} model, where estimates of the risk functionals are induced from a copula-GARCH model~\citep{jondeau2006copula,huang2009estimating} among $\{\bX_t\}$.
We also consider the \emph{top-down GARCH (GARCH.TD}) model, where we fit $d$ number of bivariate copula-GARCH models on $\{(X_{j,t},S_t)\}$, $j=1,\dots,d$.
Since the above three models do not take the hysteresis effect and time-varying correlations into account, our fourth model is the bivariate \emph{hysteretic autoregressive model with GARCH error and dynamic conditional correlations }~\citep[\emph{HAR.GARCH},~][]{chen2019quantile}, fitted to $\{(X_{j,t},S_t)\}$, $j=1,\dots,d$, in the top-down approach.
Following~\eqref{eq:full:allocation}, total ES is estimated as the sum of estimated ESCs in this model.

We next include two compositional regression models in our comparison.
These models share the estimates $\{\widehat{VaR}_t,\widehat{ES}_t\}$, which are obtained by an AR-GARCH model on $\{S_t\}$ used in the above GARCH.TD model for the purpose of comparing these models.
Our fifth model is termed the \emph{compositional regression model with least square estimation (CR.LSE)}.
In this model we follow~\citet{boonen2019forecasting} and first estimate a series of ESCs under the assumption that each $\bX_t|\mathcal G_{t-1}$ follows an elliptical distribution.
We then transform this series by~\eqref{eq:closing} to obtain the allocation weights $\{\hat{\bm{w}}_t\}\subset \mathcal S_d$.
By regarding this set of allocation weights as compositional data, we fit the compositional regression model~\eqref{eq:w:model} by the standard least square estimation.
Finally, our sixth model is the \emph{compositional regression model based on score optimization (CR.OPT)}, which is the proposed model described in Section~\ref{sec:proposed:model}.
To preserve interpretability of the model, we choose $\operatorname{ilr}$-transformed variables for $h(\bm{X}_t)$.
More specifically, we choose 
$h(\bm{X}_s, s\le t)=(h^{+}(\bm{X}_s, s\le t),h^{-}(\bm{X}_s, s\le t))^\top$ with 
\begin{align*}
    h^{+}(\bm{X}_s, s\le t)&=\operatorname{ilr}\circ\operatorname{C}\left(
\max\left(\frac{1}{t_0}\sum_{s=0}^{t_0-1}\bX_{t-s} S_{t-s}, \utwi{\epsilon}\right)\right),\\
h^{-}(\bm{X}_s, s\le t)&=\operatorname{ilr}\circ\operatorname{C}\left(-\min\left(
\frac{1}{t_0}\sum_{s=0}^{t_0-1}\bX_{t-s} S_{t-s},-\utwi{\epsilon}\right)\right),
\end{align*}
where $t_0=7$,
$\utwi{\epsilon}=(0.01,\dots,0.01)\in \mathbb{R}^d$ is a $d$-vector of small numbers, and $\max$ and $\min$ are applied component-wise.
The moving average term $(1/t_0)\,\sum_{s=0}^{t_0-1}\bX_{t-s} S_{t-s}$ is motivated by the fact that the allocation weight $\bm{w}_t$ equals $\operatorname{C}(\mathbb{E}[\bX_tS_t|{\cal G}_{t-1}])$ when $\bX_t|{\cal G}_{t-1}$ follows an elliptical distribution with zero vector of location parameters.
Therefore, we regard the corresponding regression coefficients of $h(\bm{X}_t)$ as the positive and negative effects of the local covariance between $\bX$ and $S$, calculated over the past $t_0$ days, to the balance among allocation weights.
In particular, for the reduced model:
\begin{align*}
\operatorname{ilr}(\bm{w}_{t+1})=
\operatorname{ilr}(\tilde{\utwi{\tau}})+
\phi \operatorname{ilr}(\bm{w}_t)+
\psi^{+}\operatorname{ilr}(h^{+}(\bm{X}_s, s\le t))+
\psi^{-}\operatorname{ilr}(h^{-}(\bm{X}_s, s\le t)),
\end{align*}
where $\tilde{\utwi{\tau}}=\operatorname{ilr}^{-1}(\utwi{\tau})$ and $\phi,\psi^{+},\psi^{-}\in \mathbb{R}$, the vector of weights $\bm{w}_{t+1}$ is given by the \emph{power-perturbation combination} of $\bm{w}_t$, $h^{+}(\bm{X}_s, s\le t)$ and $h^{-}(\bm{X}_s, s\le t)$, the counterpart of linear combination in a simplicial geometry; see, for example,~\citet{aitchison2005compositional}.
We extend this model by allowing mutual effects between $\operatorname{ilr}_1$ and $\operatorname{ilr}_2$.
We introduce $\utwi{\epsilon}$ to ensure that all the components of $\max(\bX_t S_t, \utwi{\epsilon})$ and $\min(
\bX_t S_t,-\utwi{\epsilon})$ are non-zero.
More sophisticated treatment with zeros can be possible; see~Chapter~4 of~\citet{pawlowsky2011compositional} and references therein.
For the initial allocation weight $\bm{w}_1$, we use the compositional data $\hat{\bm{w}}_1$ generated in CR.LSE.


In Figures~\ref{fig:ts:var:es} and~\ref{fig:ts:esc} we display the estimated dynamics of total VaR, total ES, and ESCs on the out-of-sample period $\mathbb{T}_{\text{out}}$ together with the realized losses.
Note that the forecasted VaR, ES and ESCs in the whole period are all greater than $0$.
To summarize our observations on the models, we find that HS only captures the average trends, HAR.GARCH tends to estimate ESCs lower than others, and CR.OPT typically  leads to more fluctuated estimates.
Compared with these three models, the other three models seem to produce estimates relatively similar with each other.


\begin{figure}[p]
\centering
\includegraphics[width=1.0\textwidth]{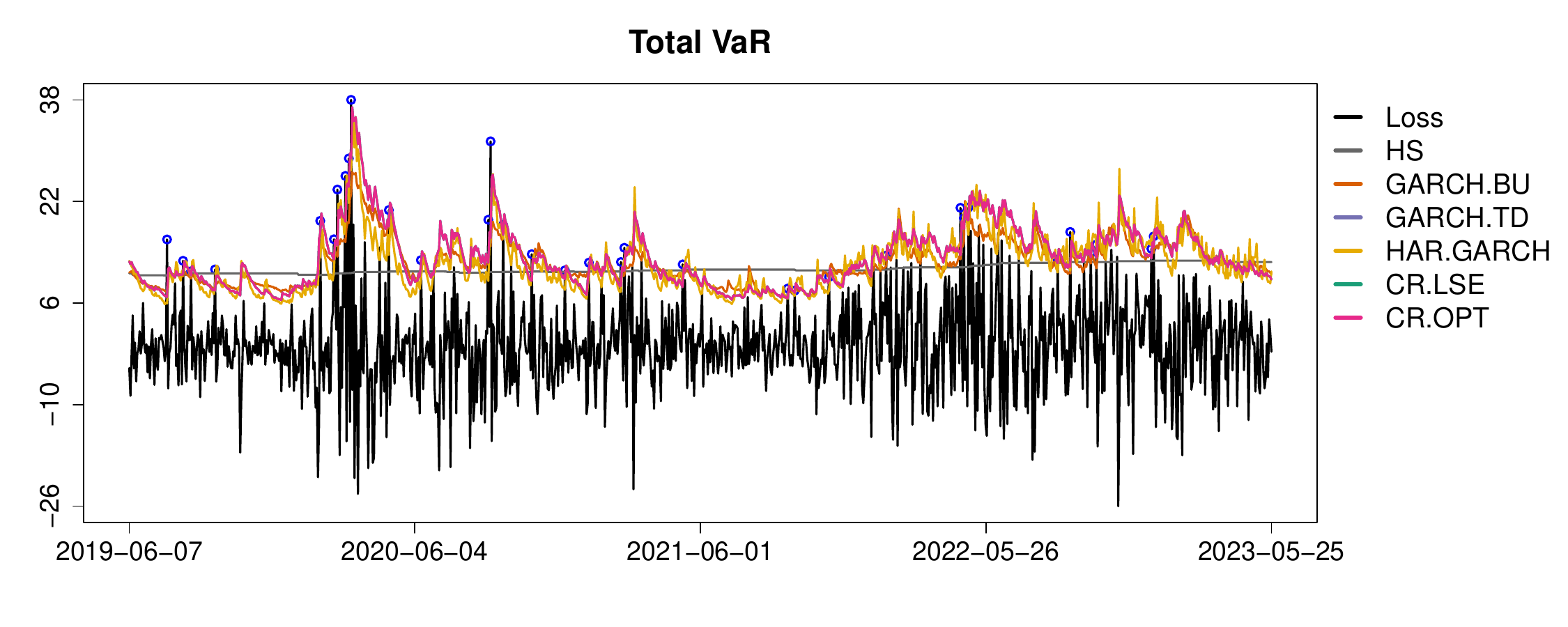}
\includegraphics[width=1.0\textwidth]{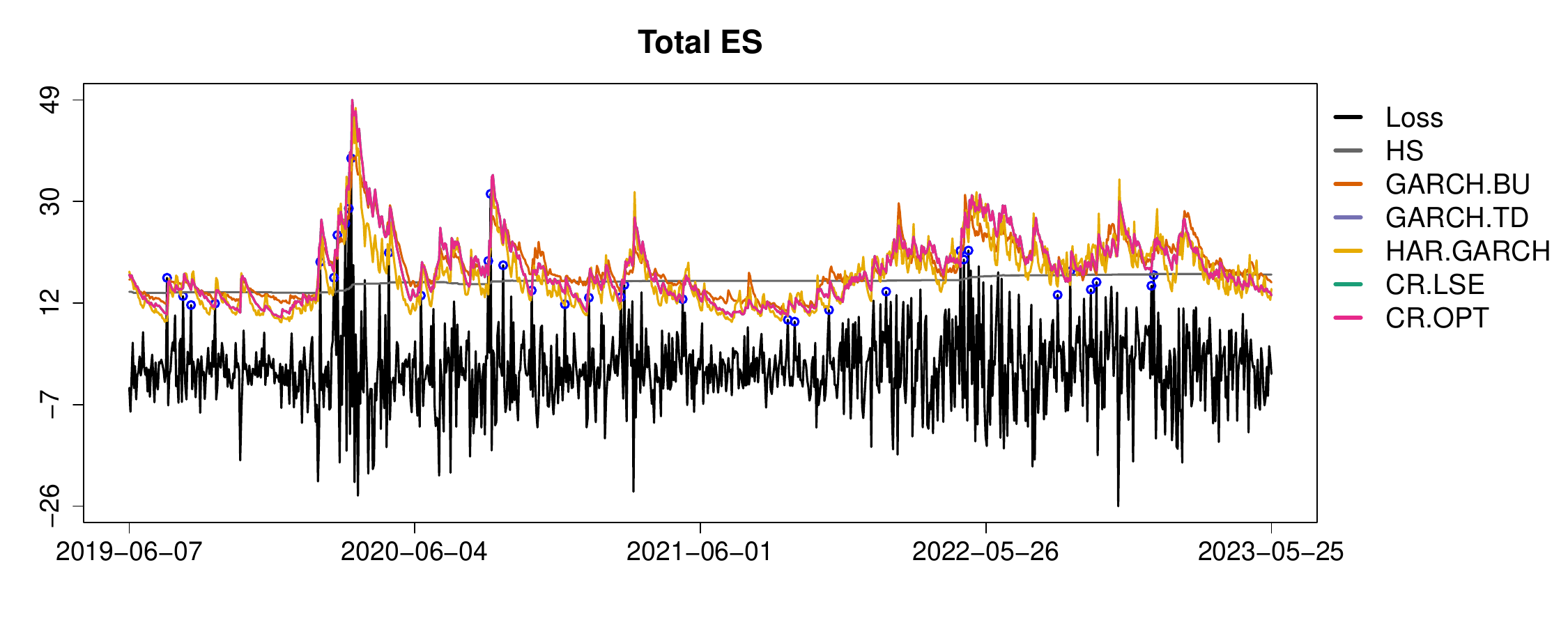}
\caption{Time series plots of total VaR and total ES with confidence level $0.975$ estimated by the six models.
The black lines represent the time series of the total loss of the portfolio, and the losses exceeding the estimated total VaRs used in CR.OPT are marked in blue.}
\label{fig:ts:var:es}
\end{figure}

\begin{figure}[p]
\centering
\includegraphics[width=1.0\textwidth]{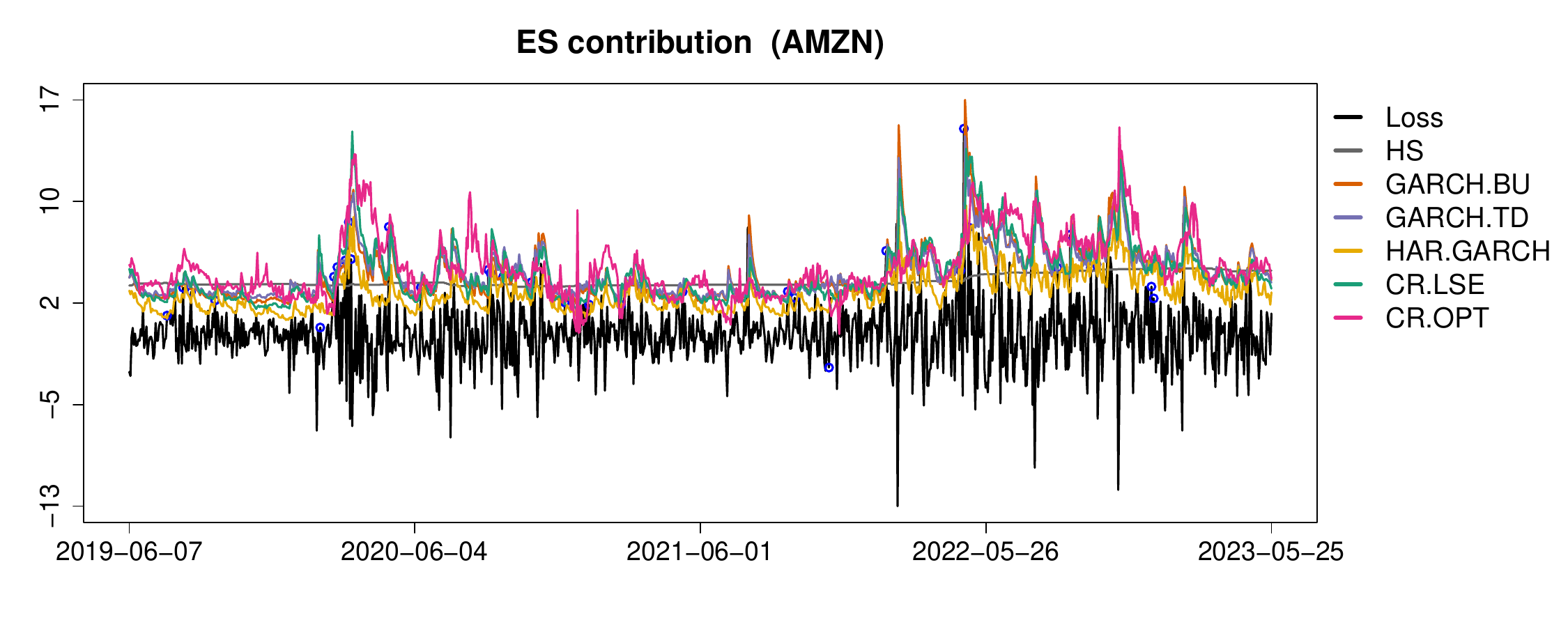}
\includegraphics[width=1.0\textwidth]{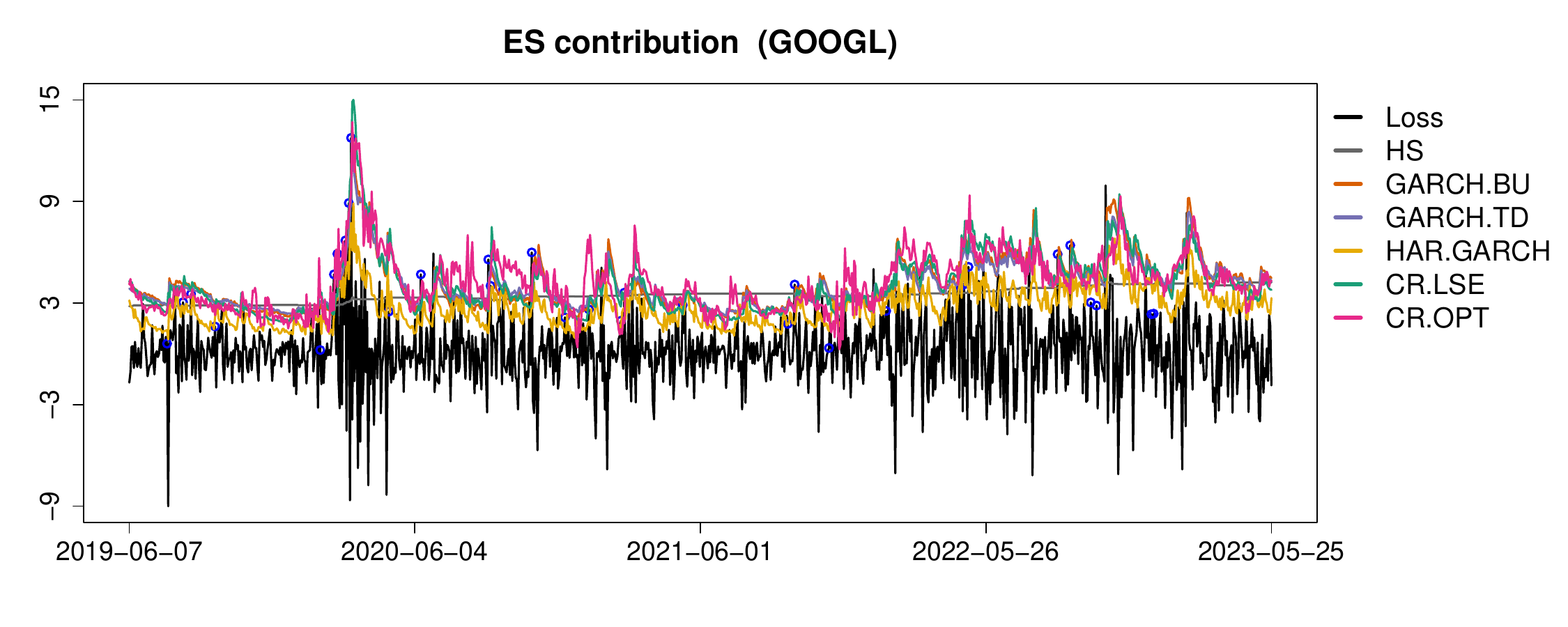}
\includegraphics[width=1.0\textwidth]{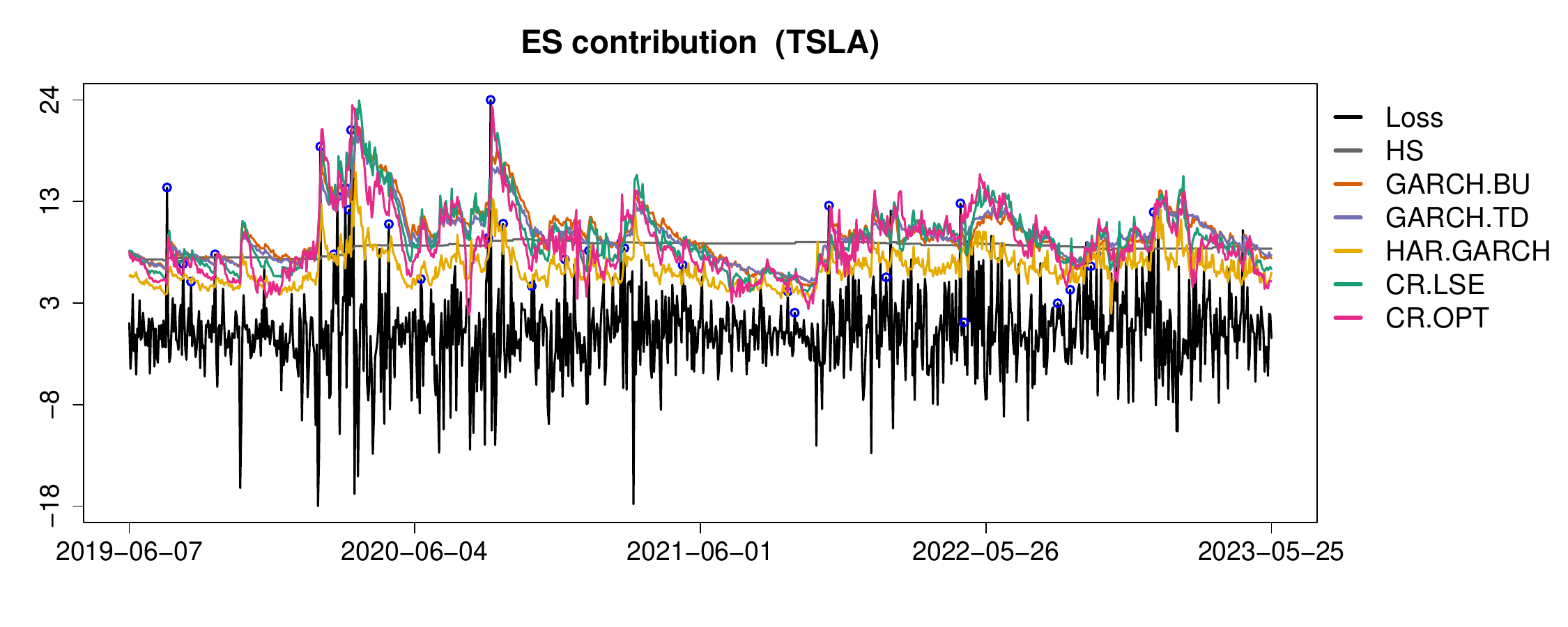}
\caption{Time series plots of ESCs with confidence level $0.975$ estimated by the six models.
The black lines represent the time series of the corresponding marginal losses, and the losses exceeding the estimated total VaRs used in CR.OPT are marked in blue.}
\label{fig:ts:esc}
\end{figure}

In Table~\ref{table:parameters} we provide summary statistics of the estimated parameters in CR.LSE and CR.OPT over the backtesting period. 
According to the table, the parameters estimated in CR.OPT are overall more volatile over time compared with those in CR.LSE.
For both of CR.LSE and CR.OPT, the diagonal elements of $\utwi{\Phi}$ tend to take large values than the off-diagonal elements.
In addition, the effect of $h(\bm{X}_s, s\le t)$ parametrized by $\utwi{\Psi}$ is estimated to be constantly small in CR.LSE whereas it is sometimes significant in CR.OPT.

\begin{table}[p]
\centering
\caption{
Summary statistics of the parameters of the model $\operatorname{ilr}(\bm{w}_{t+1})=
\utwi{\tau}+
\utwi{\Phi} \operatorname{ilr}(\bm{w}_t)+
\utwi{\Psi}^{+}\operatorname{ilr}(h^{+}(\bm{X}_s, s\le t))+
\utwi{\Psi}^{-}\operatorname{ilr}(h^{-}(\bm{X}_s, s\le t))$ over the backtesting period, where $\utwi{\tau}=(\tau_1,\tau_2)^\top$, $\utwi{\Phi}=(\phi_{ij})_{2\times 2}$, $\utwi{\Psi}^{+}=(\psi^{+}_{ij})_{2\times 2}$ and $\utwi{\Psi}^{-}=(\psi^{-}_{ij})_{2\times 2}$.}
\label{table:parameters}
\scalebox{0.9}{
\begin{tabular}{rrrrrrrrrrrr}
  \hline
& \multicolumn{5}{c}{CR.LSE} & & \multicolumn{5}{c}{CR.OPT}\\
  \hline
 & Mean & Median & SD  & 1st Qu. & 3rd Qu.  & & Mean & Median & SD  & 1st Qu. & 3rd Qu. \\ 
  \hline
  $\tau_1$ & -0.003 & -0.003 & 0.003 & -0.005 & -0.001 & &0.045 & 0.037 & 0.045 & 0.018 & 0.063 \\ 
  $\tau_2$ & -0.043 & -0.044 & 0.009 & -0.050 & -0.035 && -0.041 & -0.039 & 0.027 & -0.056 & -0.025 \\ 
  $\phi_{11}$ & 0.793 & 0.790 & 0.021 & 0.779 & 0.801 & &0.839 & 0.876 & 0.119 & 0.800 & 0.912 \\ 
  $\phi_{21}$ & 0.027 & 0.031 & 0.010 & 0.017 & 0.035 & &0.084 & 0.052 & 0.135 & -0.005 & 0.146 \\ 
  $\phi_{12}$ & -0.010 & -0.007 & 0.009 & -0.010 & -0.004& & -0.071 & -0.073 & 0.094 & -0.130 & -0.033 \\ 
 $\phi_{22}$ & 0.901 & 0.898 & 0.018 & 0.889 & 0.903 & &0.877 & 0.891 & 0.072 & 0.839 & 0.928 \\ 
  $\psi^{+}_{11}$ & 0.053 & 0.053 & 0.004 & 0.049 & 0.057 && -0.051 & -0.036 & 0.070 & -0.073 & -0.015 \\ 
  $\psi^{+}_{21}$ & 0.011 & 0.011 & 0.002 & 0.009 & 0.013 && 0.058 & 0.060 & 0.039 & 0.038 & 0.080 \\ 
  $\psi^{+}_{12}$ & 0.006 & 0.006 & 0.002 & 0.004 & 0.007 && -0.014 & -0.012 & 0.036 & -0.039 & 0.010 \\ 
  $\psi^{+}_{22}$ & 0.035 & 0.038 & 0.006 & 0.036 & 0.039 && 0.082 & 0.083 & 0.036 & 0.053 & 0.108 \\ 
  $\psi^{-}_{11}$ & 0.053 & 0.052 & 0.005 & 0.049 & 0.056 && -0.002 & 0.019 & 0.135 & -0.056 & 0.082 \\ 
 $\psi^{-}_{21}$ & 0.005 & 0.005 & 0.002 & 0.003 & 0.007 && 0.056 & 0.051 & 0.104 & -0.006 & 0.107 \\ 
  $\psi^{-}_{12}$ & 0.006 & 0.007 & 0.003 & 0.003 & 0.009 && 0.051 & 0.020 & 0.167 & -0.041 & 0.116 \\ 
  $\psi^{-}_{22}$ & 0.043 & 0.046 & 0.008 & 0.043 & 0.048  && 0.111 & 0.101 & 0.085 & 0.057 & 0.157 \\ 
   \hline
\end{tabular}}
\end{table}

\subsection{Model evaluation}\label{sec:evaluation}

We conduct the comparative and traditional backtests in the two-step approach, where we handle hypotheses on VaR (and ES) and those on ESCs separately; see Section~5.4 in \citep{fissler2024backtesting}.
In Section~\ref{sec:one:step:approach} we also conduct those in the alternative one-step approach~\citep{fissler2024backtesting}.
For the comparative backtests, we follow Section~2.3 of \citet{nolde2017elicitability} and conduct the \emph{Diebold-Mariano (DM)}-tests~\citep{diebold1995comparing} for the series of score differences.
For an \emph{HAC estimator}~\citep{andrews1991heteroskedasticity} in the test statistic, we choose the \emph{Bartlett kernel}~\citep{newey1987simple} with the automatic bandwidth estimator based on AR(1) approximation; see Section~6 of~\citet{andrews1991heteroskedasticity}.
In the context of~\citet{nolde2017elicitability}, we choose CR.OPT for the benchmark model and compare other models as internal models.
Consequently, in the three-zone approach of \citet{fissler2016expected}, the red region indicates that the superiority of CR.OPT over a compared model is statistically supported, the green region shows that CR.OPT is inferior to the alternative model, and the yellow region means that further investigation is required since there is no statistical evidence on the order between the two models.
We fix the significance level to be $0.05$.
We repeat this analysis for total VaR, total ES, tuple of ESCs, and $j$th ESC for $j=1,\dots,d$, although, in terms of multi-objective elicitability, the results on ESCs are not meaningful if forecast accuracy is ranked in total VaRs.
For the scoring functions, we choose the pinball loss for VaR, \emph{AL log score}~\citep{taylor2019forecasting} for ES, and squared loss for tuple of ESCs, as well as each of them.
Tables~\ref{table:test:var:es:esc} and~\ref{table:test:escs} report the average scores and p-values of the one-sided and two-sided DM-tests, based on which we obtain the regions in the three-zone approach.

\begin{table}[p]
\caption{Results of the DM-tests to compare the forecast accuracy of total VaR, total ES, and tuple of ESCs with CR.OPT as the benchmark model.
}
\label{table:test:var:es:esc}
\centering
\scalebox{0.9}{
\begin{tabular}{lrrrrrr}
  \hline
 & Average score$^a$ & Rank$^b$ &  \multicolumn{3}{c}{p-value$^c$} & Region$^d$
 \\  \hline
H0 & &  & $=$ CR.OPT & $\le$ CR.OPT & $\ge$ CR.OPT  & \\
  \hline
  \multicolumn{7}{l}{(1) Total VaR}\\
HS & 56.038 & 6 & {\bf 0.008} & 0.996 & {\bf 0.004} & {\bf red} \\
GARCH.BU & 47.903 & 5 & 0.254 & 0.873 & 0.127 & yellow \\
GARCH.TD & {\bf 46.721} & 1 & ---- &---- &---- & ---- \\ 
  HAR.GARCH & 46.885 & 4 & 0.871 & 0.565 & 0.435 & yellow \\
  CR.LSE & {\bf 46.721} & 1 & ---- & ---- & ---- & ---- \\
  CR.OPT & {\bf 46.721} & 1 & ---- & ---- & ---- & ---- \\  \hline
    \multicolumn{7}{l}{(2) Total ES}\\
  HS & 423.005 & 6 & {\bf 0.003} & 0.998 & {\bf 0.002} & {\bf red} \\
  GARCH.BU & 393.015 & 5 & 0.498 & 0.751 & 0.249 & yellow \\
  GARCH.TD & {\bf 391.436} & 1 & ---- & ---- & ---- & yellow \\
  HAR.GARCH & 392.767 & 4 & 0.653 & 0.673 & 0.327 & yellow \\
  CR.LSE & {\bf 391.436} & 1 & ---- & ---- & ----& ---- \\
  CR.OPT & {\bf 391.436} & 1 & ---- &  ----& ---- &---- \\
  \hline
  \multicolumn{7}{l}{(3) Tuple of ESCs}\\
HS & 189.216 & 6 & {\bf 0.003} & 0.998 & {\bf 0.001} & ({\bf red}) \\
  GARCH.BU & 118.196 & 4 & {\bf 0.021} & 0.990 & {\bf 0.010} & {\bf red} \\
  GARCH.TD & {\bf 94.699} & 2 & 0.255 & 0.872 & 0.128 & yellow \\
  HAR.GARCH & 124.779 & 5 &  0.202 & 0.899 & 0.101 & yellow \\
  CR.LSE & {\bf 95.256} & 3 & 0.086 & 0.957 & {\bf 0.043} & {\bf red} \\
  CR.OPT & {\bf 85.839} & 1 & ---- & ---- & ---- & ---- \\
  \hline
\end{tabular}
}
\begin{tablenotes}
\item $^a$Average scores are multiplied by $100$. \item $^b$Ranks are based on average scores. \item $^c$The p-values are calculated based on the three different null hypotheses, where ``$=$ CR.OPT" means that the model is equally accurate as CR.OPT, and ``$\le$ ($\ge$) CR.OPT" represents the hypothesis that the model is less (more) accurate than CR.OPT.
\item $^d$Results of the three-zone approach are presented.
For (3), the result of HS is enclosed by parentheses since the order on total VaR is already supported in (1).
\end{tablenotes}
\end{table}

\begin{table}[p]
\caption{Results of the DM-tests to compare the forecast accuracy of $j$th ESC, $j\in\{1,\dots,d\}$, with CR.OPT as the benchmark model.
See the footnotes in~Table~\ref{table:test:var:es:esc} for details.}
\label{table:test:escs}
\centering
\scalebox{1}{
\begin{tabular}{lrrrrrr}
  \hline
 & Average score & Rank &  \multicolumn{3}{c}{p-value} & Region
 \\
  \hline
H0 & &  & $=$ CR.OPT & $\le$ CR.OPT & $\ge$ CR.OPT & \\
  \hline
    \multicolumn{7}{l}{(1) ESC (AMZN)}\\
HS & 33.603 & 6 & 0.091 & 0.954 & {\bf 0.046} & ({\bf red}) \\ 
  GARCH.BU & {\bf 18.646} & 2 & 0.857 & 0.428 & 0.572 & yellow \\ 
  GARCH.TD & 19.951 & 4 & 0.784 & 0.608 & 0.392 & yellow \\ 
  HAR.GARCH & {\bf 13.351} & 1 & 0.495 & 0.247 & 0.753 & yellow \\ 
  CR.LSE & 23.295 & 5 & 0.152 & 0.924 & 0.076 & yellow \\ 
  CR.OPT & {\bf 19.291} & 3 & ---- & ---- & ---- & ---- \\ 
  \hline    \multicolumn{7}{l}{(2) ESC (GOOGL)}\\
  HS & 28.672 & 6 & 0.083 & 0.959 & {\bf 0.041} & ({\bf red}) \\ 
  GARCH.BU & 13.291 & 4 & 0.496 & 0.752 & 0.248 & yellow \\ 
  GARCH.TD & {\bf 11.865} & 3 & 0.963 & 0.518 & 0.482 & yellow \\ 
  HAR.GARCH & 15.629 & 5 & 0.285 & 0.857 & 0.143 & yellow \\ 
  CR.LSE & {\bf 11.054} & 1 & 0.658 & 0.329 & 0.671 & yellow \\ 
  CR.OPT & {\bf 11.782} & 2 & ---- & ---- & ---- & ---- \\ 
   \hline    \multicolumn{7}{l}{(3) ESC (TSLA)}\\
   HS & 126.941 & 6 & {\bf 0.011} & 0.994 & {\bf 0.006} & ({\bf red}) \\ 
  GARCH.BU & 86.260 & 4 & {\bf 0.011} & 0.994 & {\bf 0.006} & {\bf red} \\ 
  GARCH.TD & {\bf 62.882} & 3 & 0.257 & 0.871 & 0.129 & yellow \\ 
  HAR.GARCH & 95.799 & 5 & 0.130 & 0.935 & 0.065 & yellow \\ 
  CR.LSE & {\bf 62.115} & 2 & 0.078 & 0.961 & {\bf 0.039} & {\bf red} \\ 
  CR.OPT & {\bf 54.765} & 1 & ---- & ---- & ---- &----  \\ 
   \hline
\end{tabular}
}
\end{table}

We next describe our traditional backtests.
Denote by $\left(\widehat{ESC}_{1,t},\dots,\widehat{ESC}_{d,t},\widehat{VaR}_t,\widehat{ES}_t\right)$, $t\in \mathbb{T}_{\text{out}}$, the tuple of forecasted risk functionals.
For total VaR, let $\V_t^{\text{VaR}}=\V^\text{VaR}\left(\widehat{VaR}_t,S_t\right)$, where $\V^{\text{VaR}}$ is as defined in Proposition~\ref{prop:joint:identifiability:esc}.
We construct the DM-type test statistic $\sqrt{T}\bar \V_T^{\text{VaR}}/\hat \sigma_T^{\text{VaR}}$, where $\bar \V_T^{\text{VaR}}=(1/T)\sum_{t=1}^T \V_t^{\text{VaR}}$, and $\hat \sigma_T^{\text{VaR}}$ is the HAC estimator of the standard deviation of $\sqrt{T}\bar \V_T^{\text{VaR}}$ as considered in the comparative backtests above.
Following Section~2.2 of~\citet{nolde2017elicitability}, we use a normal distribution as a null distribution to test \emph{unconditional calibration}, \emph{sub-calibration}, and \emph{super-calibration}, which are associated with precise estimation, over-estimation, and under-estimation, respectively, for this identification function of VaR; see Section~\ref{sec:order:sensitivity}.
Following these relationships, we adopt the three-zone approach and say that the forecast model is in the red region if the null hypothesis of sub-calibration is rejected, in the green region if the null hypothesis of super-calibration is rejected, and in the yellow region if none of these hypotheses are rejected.
For total ES, we replace $\V_t^{\text{VaR}}$ in the above analysis with $\V_t^\text{ES}=\V^\text{ES}\left((\widehat{VaR}_t,\widehat{ES}_t),S_t\right)$, where:
\begin{align*}
\V^\text{ES}\left((v,e),s\right)=v-e-\frac{1}{1-\alpha}\id\{s>v\}(v-s).
\end{align*}

This identification function maintains the above relationships on over- and under-estimation; see Section~2.2.2 of~\citet{nolde2017elicitability}.
For $j$th ESC, $j\in \{1,\dots,d\}$, we replace $\V_t^{\text{VaR}}$ in the above analysis with 
\begin{align*}
\V_{j,t}^\text{ESC}=\V_j^\text{ESC}\left((\widehat{ESC}_{j,t},\widehat{VaR}_t),\bX_t\right),
\end{align*}
where $\V_j^\text{ESC}$ is as defined in Proposition~\ref{prop:joint:identifiability:esc}.
We report the results of these tests in Tables~\ref{table:test:V:var:es} and~\ref{table:test:V:escs}.
Due to joint identifiability of total ES~\citep[see~Section~2.1~of~][]{nolde2017elicitability} and $j$th ESC for $j=1,\dots,d$
 (see Proposition~\ref{prop:joint:identifiability:esc}) in combination with total VaR, we also conduct Wald-tests for these risk quantities in the one-step approach; see Section~\ref{sec:one:step:approach} for details.

\begin{table}[p]
\caption{
Results of the DM-type tests to verify the forecast accuracy of total VaR and total ES.
}
\label{table:test:V:var:es}
\centering
\scalebox{1}{
\begin{tabular}{lrrrrr}
  \hline
 & Average score & \multicolumn{3}{c}{p-value$^a$} & Region$^b$
 \\
  \hline
H0 & & $=$ True & $\le$ True & $\ge$ True & \\ \hline
  \multicolumn{6}{l}{(1) Total VaR}\\

HS & 0.032 & {\bf 0.000} & 1.000 & {\bf 0.000} & {\bf red} \\ 
  GARCH.BU & 0.016 & {\bf 0.009} & 0.995 & {\bf 0.005} & {\bf red} \\ 
  GARCH.TD & {\bf 0.009} & 0.116 & 0.942 & 0.058 & yellow \\ 
  HAR.GARCH & 0.011 & 0.057 & 0.971 & {\bf 0.029} & {\bf red} \\ 
  CR.LSE & {\bf 0.009} & 0.116 & 0.942 & 0.058 & yellow \\ 
  CR.OPT & {\bf 0.009} & 0.116 & 0.942 & 0.058 & yellow \\ 
  \hline
    \multicolumn{6}{l}{(2) Total ES}\\
HS & 6.441 & {\bf 0.002} & 0.999 & {\bf 0.001} & {\bf red} \\ 
  GARCH.BU & 0.634 & 0.645 & 0.677 & 0.323 & yellow \\ 
  GARCH.TD & {\bf 0.508} & 0.652 & 0.674 & 0.326 & yellow \\ 
  HAR.GARCH & 1.768 & 0.169 & 0.916 & 0.084 & yellow \\ 
  CR.LSE & {\bf 0.508} & 0.652 & 0.674 & 0.326 & yellow \\ 
  CR.OPT & {\bf 0.508} & 0.652 & 0.674 & 0.326 & yellow \\
   \hline
\end{tabular}
}
\begin{tablenotes}
    \item $^a$The p-values are calculated based on the three different null hypotheses, where ``$=$ True'' means that the forecast is precise, ``$\le$ True'' represents the hypothesis that the forecast is under-estimated, and ``$\ge$ True'' stands for the case when the forecast is over-estimated.
    \item $^b$ Results of the three-zone approach are presented.
\end{tablenotes}
\end{table}

\begin{table}[p]
\caption{
Results of the DM-type tests to verify the forecast accuracy of the $j$th ESC for $j=1,\dots,d$.
See the footnotes in Table~\ref{table:test:V:var:es} for details.
}
\label{table:test:V:escs}
\centering
\scalebox{1}{
\begin{tabular}{lrrrrr}
  \hline
  & Average score & \multicolumn{3}{c}{p-value} & Region
 \\
  \hline
H0 & & $=$ True & $\le$ True & $\ge$ True & \\
\hline
  \multicolumn{6}{l}{(1) ESC (AMZN)}\\
   HS & 0.013 & 0.487 & 0.757 & 0.243 & yellow \\ 
  GARCH.BU & {\bf -0.005} & 0.700 & 0.350 & 0.650 & yellow \\ 
  GARCH.TD & {\bf -0.008} & 0.594 & 0.297 & 0.703 & yellow \\ 
  HAR.GARCH & 0.032 & {\bf 0.005} & 0.998 & {\bf 0.002} & {\bf red} \\ 
  CR.LSE & {\bf -0.007} & 0.656 & 0.328 & 0.672 & yellow \\ 
  CR.OPT & -0.014 & 0.296 & 0.148 & 0.852 & yellow \\ 
  \hline    \multicolumn{6}{l}{(2) ESC (GOOGL)}\\
  HS & 0.029 & 0.088 & 0.956 & {\bf 0.044} & {\bf red} \\ 
  GARCH.BU & {\bf -0.005} & 0.669 & 0.335 & 0.665 & yellow \\ 
  GARCH.TD & -0.009 & 0.393 & 0.196 & 0.804 & yellow \\ 
  HAR.GARCH & 0.043 & {\bf 0.001} & 1.000 & {\bf 0.000} & {\bf red} \\ 
  CR.LSE & {\bf -0.006} & 0.550 & 0.275 & 0.725 & yellow \\ 
  CR.OPT & {\bf -0.005} & 0.643 & 0.322 & 0.678 & yellow \\
   \hline    \multicolumn{6}{l}{(3) ESC (TSLA)}\\
   HS & {\bf -0.009} & 0.803 & 0.401 & 0.599 & yellow \\ 
  GARCH.BU & -0.070 & {\bf 0.017} & {\bf 0.009} & 0.991 & {\bf green} \\ 
  GARCH.TD & -0.023 & 0.353 & 0.177 & 0.823 & yellow \\ 
  HAR.GARCH & 0.105 & {\bf 0.001} & 1.000 & {\bf 0.000} & {\bf red} \\ 
  CR.LSE & {\bf -0.015} & 0.536 & 0.268 & 0.732 & yellow \\ 
  CR.OPT & {\bf -0.009} & 0.701 & 0.351 & 0.649 & yellow \\ 
   \hline
\end{tabular}
}
\end{table}

In the end, we visually check the performance of the forecasts by Murphy diagrams.
For total VaR, the diagram is written based on~\eqref{eq:var:elementary}; see~\citet{ehm2016quantiles} for details.
We refer to~\citet{ziegel2020robust} for the Murphy diagram of ES.
The Murphy diagram of the tuple of ESCs is drawn based on Proposition~\ref{prop:esc:mixture} (M2).
We also display Murphy diagrams of the $j$th ESC for $j=1,\dots,d$ based on Proposition~\ref{prop:esc:mixture} (M1).
The results appear in Figure~\ref{fig:murphy}.

\begin{figure}[t]
\begin{tabular}{cc}
      \begin{minipage}[t]{0.48\hsize}
        \centering
\includegraphics[width=1.0\textwidth]{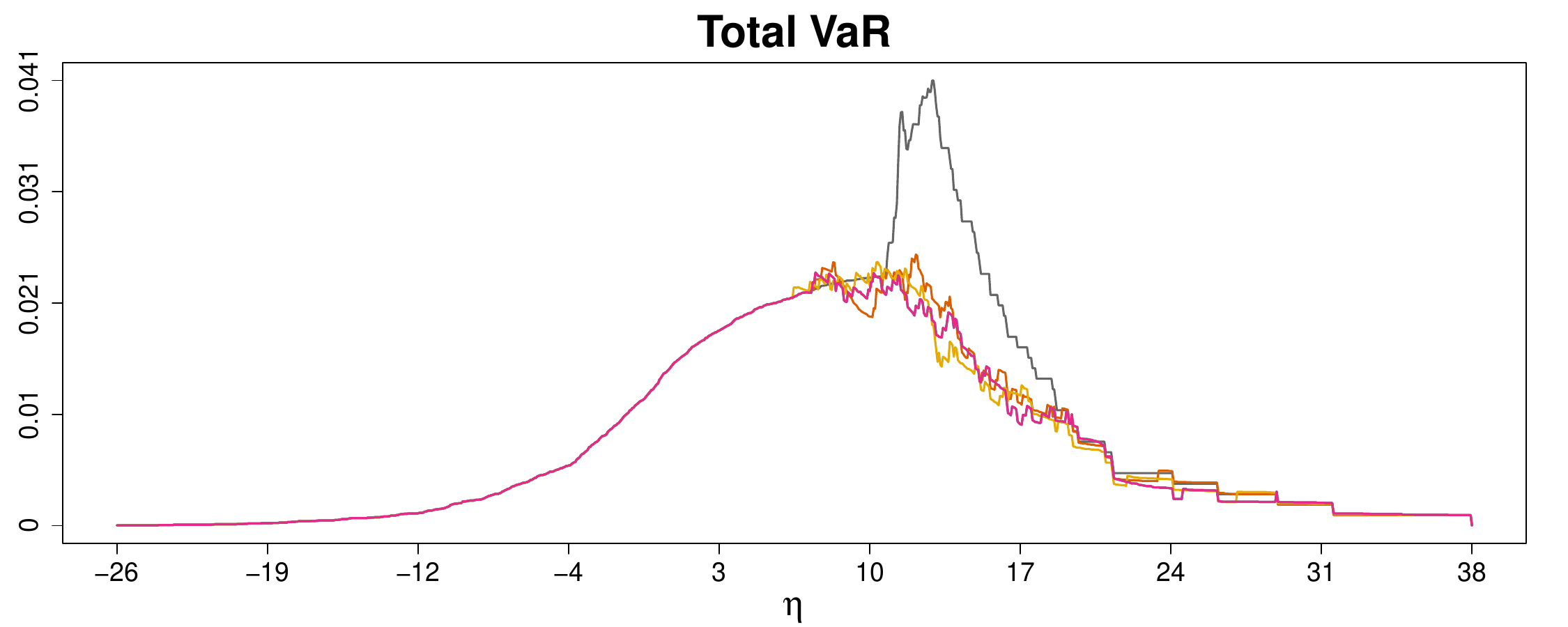}
 \end{minipage} &
       \begin{minipage}[t]{0.48\hsize}
        \centering
\includegraphics[width=1.0\textwidth]{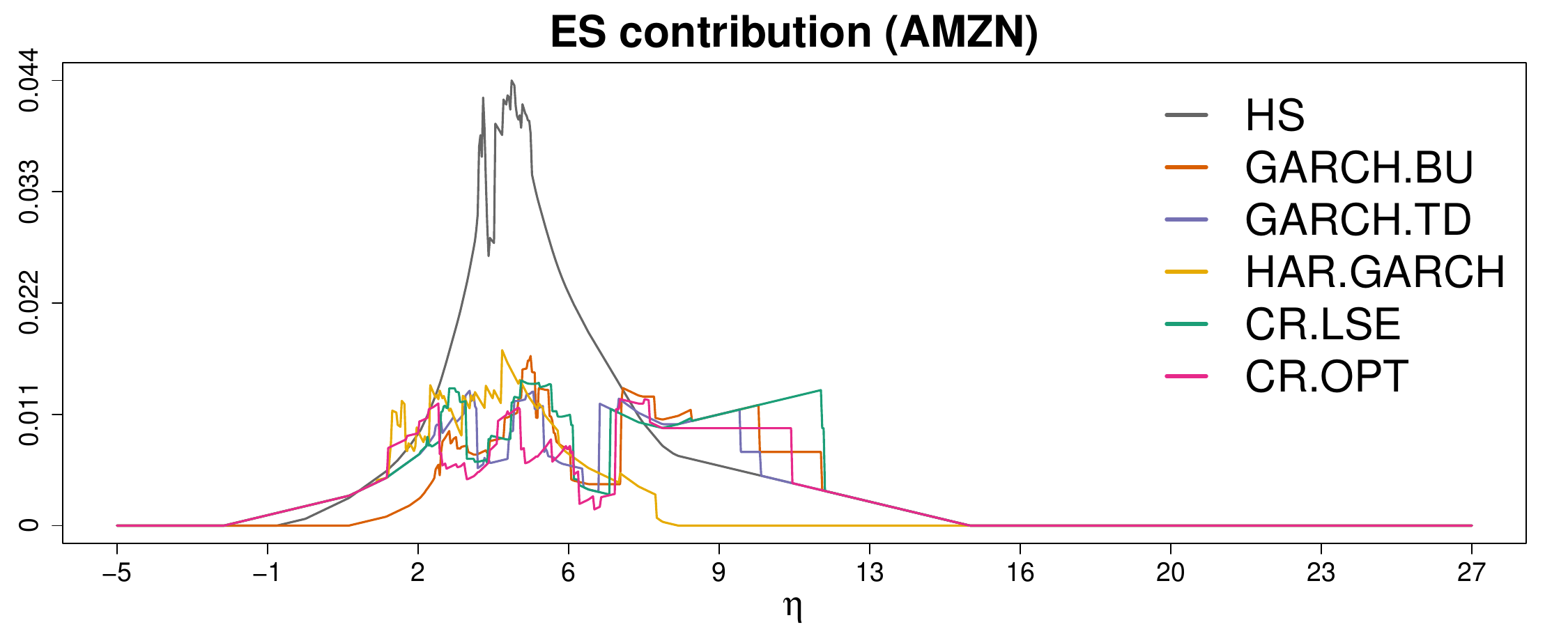}
 \end{minipage}\\
 \begin{minipage}[t]{0.48\hsize}
        \centering
\includegraphics[width=1.0\textwidth]{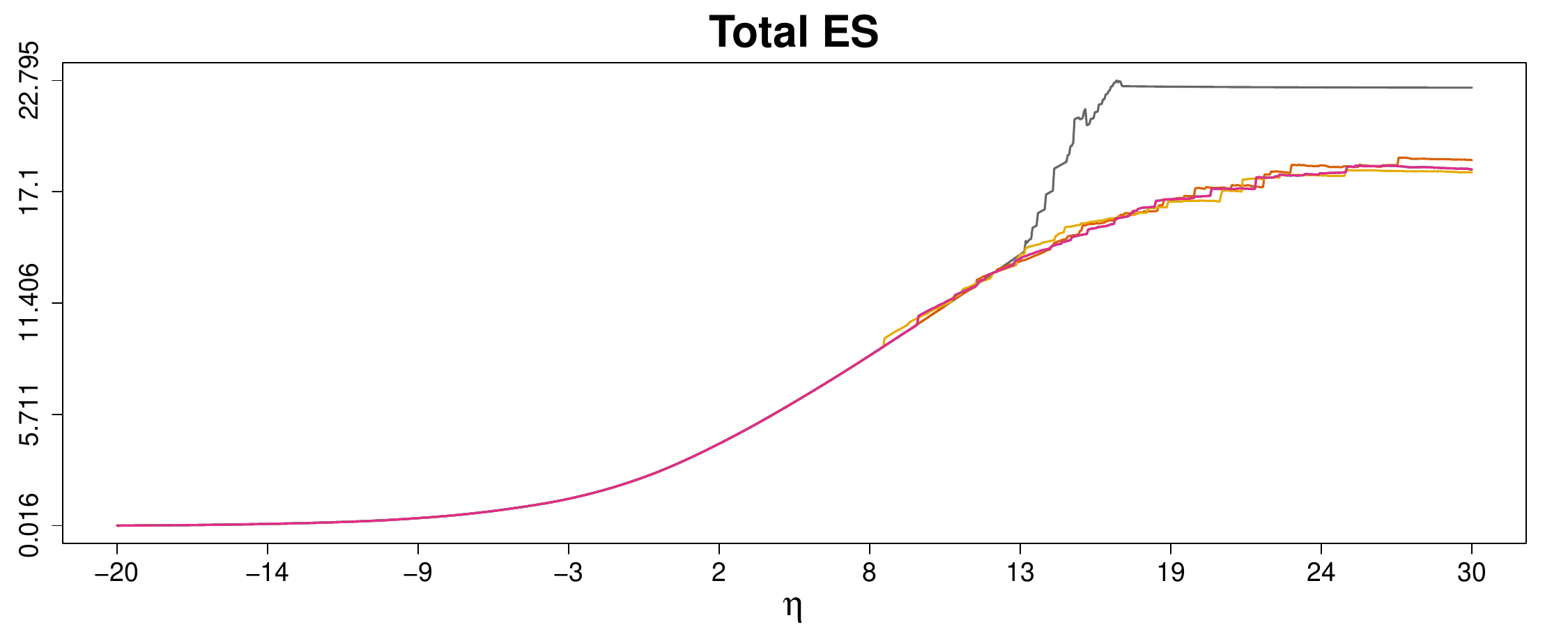}
 \end{minipage} &
       \begin{minipage}[t]{0.48\hsize}
        \centering
\includegraphics[width=1.0\textwidth]{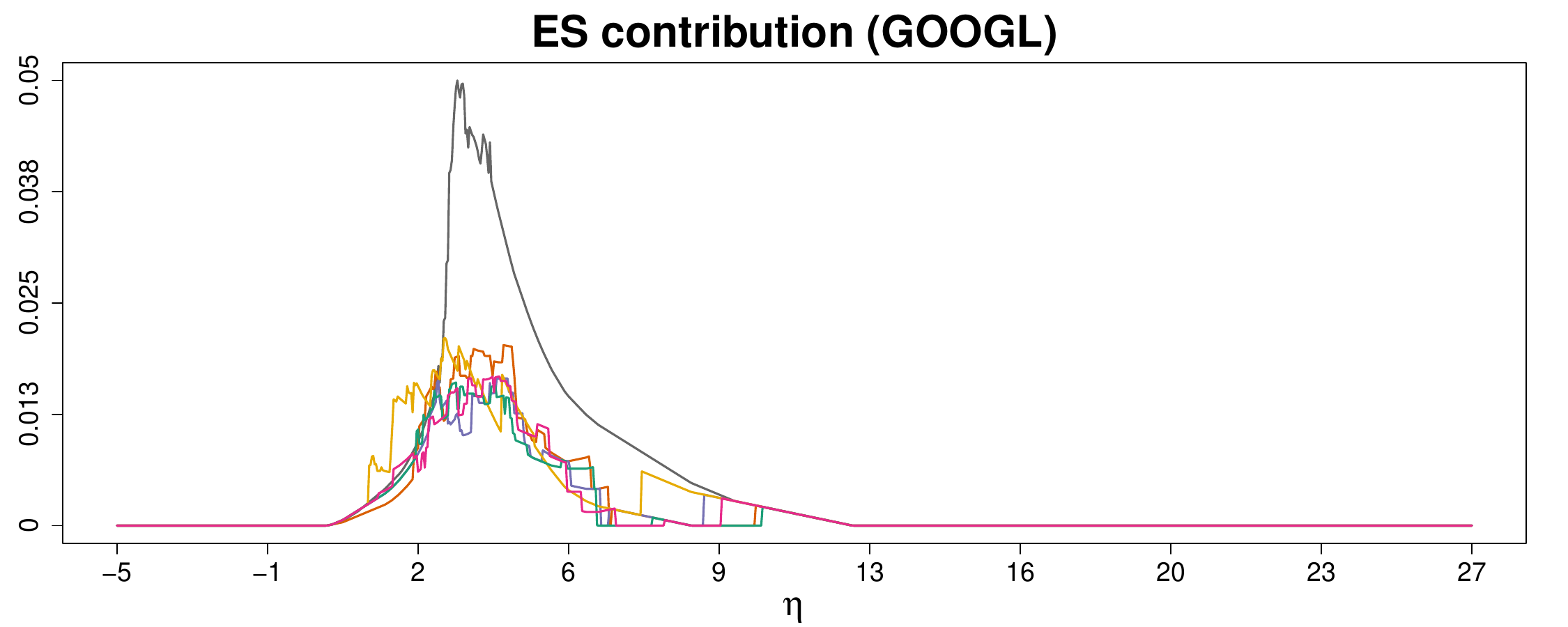}
 \end{minipage}\\
 \begin{minipage}[t]{0.48\hsize}
        \centering
\includegraphics[width=1.0\textwidth]{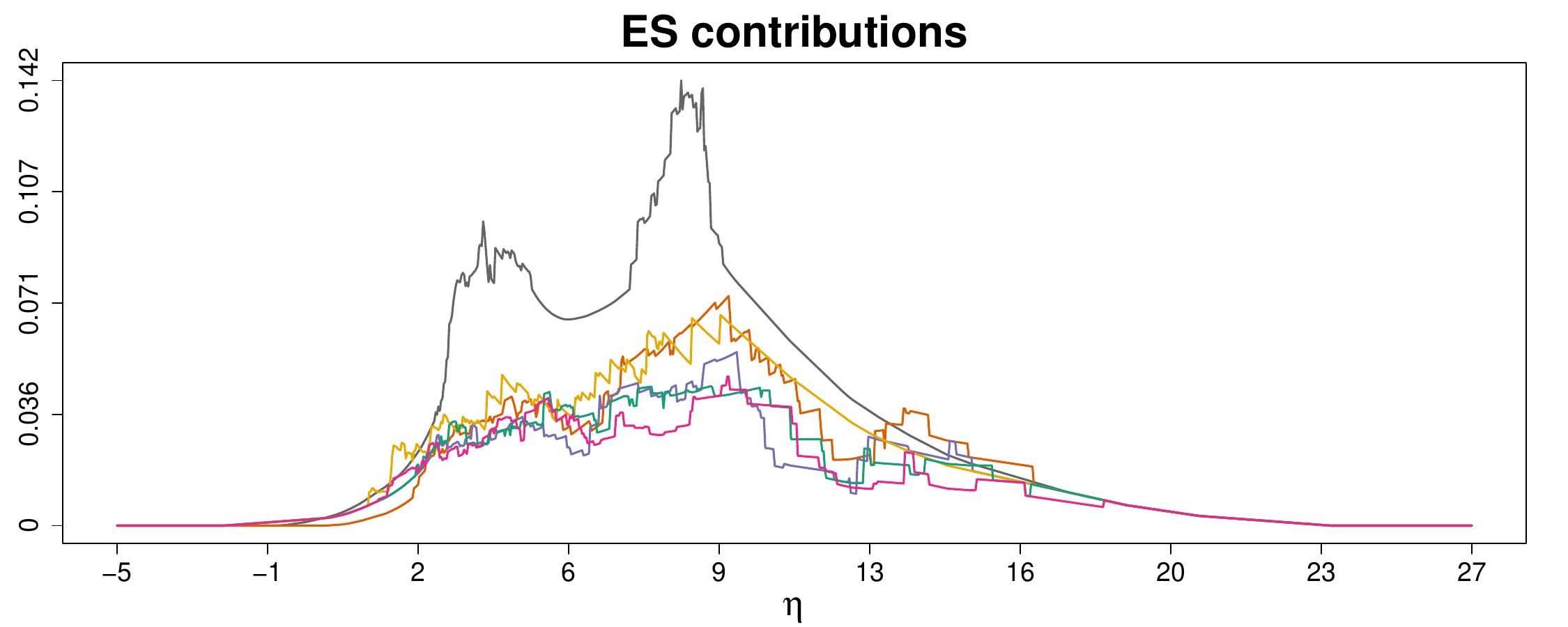}
 \end{minipage} &
       \begin{minipage}[t]{0.48\hsize}
        \centering
\includegraphics[width=1.0\textwidth]{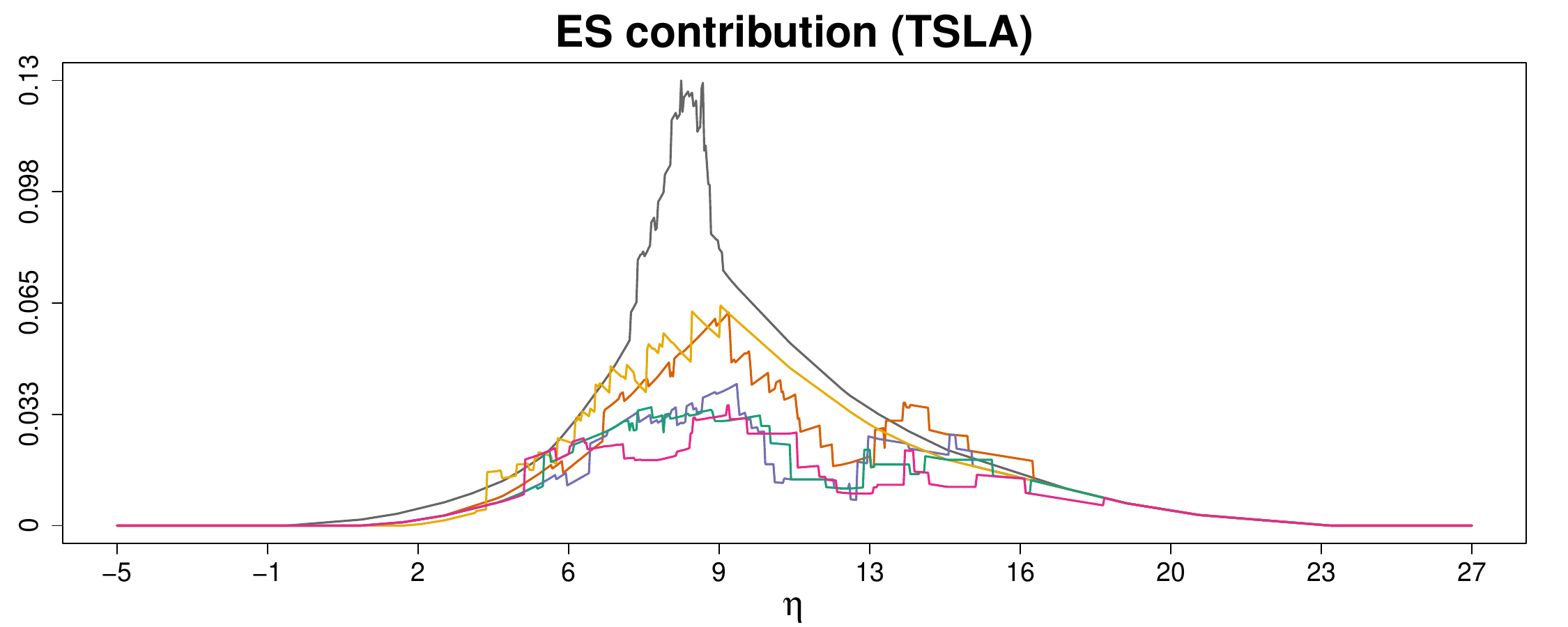}
 \end{minipage}\\
 \end{tabular}
\caption{Murphy diagrams of total VaR, total ES, tuple of ESCs, and each of them with confidence level of $0.975$ estimated by the six models.}
\label{fig:murphy}
\end{figure}

\subsection{Discussion}\label{sec:discussion}

According to the results of the comparative backtests in Tables~\ref{table:test:var:es:esc} and~\ref{table:test:escs}, we first observe that HS is inferior to others in terms of average scores, which is detected by the red region in the three-zone approach.
Moreover, CR.OPT tends to have lower scores than others and achieves the best performance for the tuple of ESCs and the ESC of TSLA.
Overall, GARCH.TD outperforms GARCH.BU, and HAR.GARCH does not perform well except for the ESC of AMZN.
For total VaR in Table~\ref{table:test:var:es:esc} (1), we observe the tendency that the forecasts of total VaR used in GARCH.TD, CR.LSE and CR.OPT are more accurate than others.
Nevertheless, no test statistically supports the difference in the forecast accuracy of total VaRs except for HS, which justifies the forecast comparisons of ESCs in terms of their scores.

In Table~\ref{table:test:var:es:esc} (2) for total ES, we find slightly lower performances of GARCH.BU and HAR.GARCH than others.
This indicates that the top-down approach can be slightly more preferable than the bottom-up approach, and it may not be recommendable to estimate total ES indirectly as the sum of ESCs.
In addition to total ES, both of GARCH.BU and HAR.GARCH do not perform well for the tuple of ESCs as well as each of them except for AMZN.
For the case of AMZN, HAR.GARCH performs the best, which may imply the existence of the hysteretic effect and/or dynamic conditional correlations in the series of stock returns.
From the perspective of multi-objective elicitability, we can directly compare the forecast accuracy of ESCs among GARCH.TD, CR.LSE and CR.OPT since they share common forecasts of total VaR.
We observe that CR.OPT outperforms CR.LSE for all the cases except the ESC of GOOGL.
Finally, we see in Table~\ref{table:S:test:wald} of Section~\ref{sec:one:step:approach} that the Wald-tests statistically support the superiority of CR.OPT to HS for AMZN and GOOGL and to HS and GARCH.BU for the tuple of ESCs and TSLA.
These orders are also detected in Table~\ref{table:test:var:es:esc}.
In addition, the Wald-tests do not support the order between CR.LSE and CR.OPT, which is supported in the DM-tests in Tables~\ref{table:test:var:es:esc} and~\ref{table:test:escs}.

We next check the results of the traditional backtests in Tables~\ref{table:test:V:var:es} and~\ref{table:test:V:escs}. Overall, the results are consistent with those of the comparative backtests, and HS particularly does not identify the true risk quantities well.
Other than HS, we find that GARCH.BU and HAR.GARCH are two models that typically fail the tests.
For these models, the forecasts are often under-estimated except for the case that GARCH.BU over-estimates the ESC of TSLA.
The other three models, GARCH.TD, CR.LSE, and CR.OPT, pass the traditional backtests.
These results are consistent with those in Table~\ref{table:V:test:wald} of Section~\ref{sec:one:step:approach}, where HS, GARCH.BU and HAR.GARCH typically fail the model validation tests.

Since all the above results depend on the specific choices of scoring functions, it is beneficial to check the Murphy diagrams to diagnose the robustness of our observations for different choices of scoring functions.
Overall, we do not observe clear uniform dominance among the curves except HS, which is distinguishable for all the risk quantities.
Compared with total VaR and total ES, larger differences are more visible for ESCs.
Particularly for the tuple of ESCs and the ESC of TSLA, the curves of CR.OPT are typically lower, and those of GARCH.BU and HAR.GARCH tend to be higher.
In addition, the curves are more tangled for ESCs of AMZN and GOOGL.
These differences are consistent with the fact that, from Table~\ref{table:test:escs} and the y-axes of Murphy diagrams of ESCs, TSLA has larger contribution to the score of the tuple of ESCs than AMZN and GOOGL have, and thus optimization in CR.OPT should put higher weight to the score of TSLA under FAP.
In summary, the presented diagrams visually indicate the superiority of CR.OPT to other models for the tuple of ESCs and the ESC of TSLA.
For the ESC of AMZN, and possibly of GOOGL, ranking of models may vary depending on the choice of the scoring function.

\section{Conclusion and outlook}\label{sec:conclusion:outlook}

We conduct traditional and comparative backtests for the gradient allocations of ES, and visually check the robustness of our observations based on Murphy diagrams.
Moreover, motivated by the comparability of ESCs in multi-objective elicitability and by the preservability of FAP, we propose a novel semiparametric compositional regression model to estimate the tuple of ESCs.
Our empirical analysis demonstrates superior performance of our proposed model for forecasting ESCs.
We believe that these results are of great benefit for portfolio and enterprise risk management in financial and regulatory applications.

We conclude this section by offering an outlook for future research on backtesting ESCs and our proposed model. First, exploring theoretical aspects of our proposed compositional regression model, such as stability, consistency, and asymptotic normality, is of great interest.
As a practical aspect, variable selection for our proposed model and the choice of the transform between the simplex and the real space are beyond the scope of the present paper.
The cases of high-dimensional portfolios and of negative or zero ESCs may require further analyses.
Moreover, in a future work it may be interesting to extend our estimation procedure of risk allocations to VaR contributions, the gradient allocation of VaR~\citep[see][for recent works]{koike2022avoiding,gribkova2023estimating}, and other capital allocation rules based on optimization \citep{dhaene2012optimal,maume2016capital,koike2021modality}.
Finally, other backtesting procedures~\citep{banulescu2021backtesting,wang2022backtesting,hoga2023monitoring,hue2024backtesting} could bring new insights on backtesting capital allocations.

\section*{Acknowledgements}\label{sec:acknowledgements}
Takaaki Koike was supported by Japan Society for the Promotion of Science (JSPS KAKENHI Grant Numbers JP21K13275 and JP24K00273). Cathy W.S. Chen's research is funded by the National Science and Technology Council, Taiwan (NSTC112-2118-M-035-001-MY3).
Edward M.H. Lin’s research is funded by the National Science and Technology Council, Taiwan (NSTC111-2118-M-029-003-MY2).
The authors also thank Tim J. Boonen for constructive comments and suggestions.


\section*{Declaration of interest}
The authors declare that they have no known competing financial interests or personal relationships that could have appeared to influence the work reported in this paper.


\begin{thebibliography}{}

\bibitem[\protect\citeauthoryear{Aitchison}{Aitchison}{1982}]{aitchison1982statistical}
Aitchison, J. (1982).
\newblock The statistical analysis of compositional data.
\newblock {\em Journal of the Royal Statistical Society: Series B
  (Methodological)\/}~{\em 44\/}(2), 139--160.

\bibitem[\protect\citeauthoryear{Aitchison and J.~Egozcue}{Aitchison and
  J.~Egozcue}{2005}]{aitchison2005compositional}
Aitchison, J. and J.~J.~Egozcue (2005).
\newblock Compositional data analysis: where are we and where should we be
  heading?
\newblock {\em Mathematical Geology\/}~{\em 37}, 829--850.

\bibitem[\protect\citeauthoryear{Andrews}{Andrews}{1991}]{andrews1991heteroskedasticity}
Andrews, D.~W. (1991).
\newblock Heteroskedasticity and autocorrelation consistent covariance matrix.
\newblock {\em Econometrica\/}~{\em 59\/}(3), 817--858.

\bibitem[\protect\citeauthoryear{Banulescu-Radu, Hurlin, Leymarie, and
  Scaillet}{Banulescu-Radu et~al.}{2021}]{banulescu2021backtesting}
Banulescu-Radu, D., C.~Hurlin, J.~Leymarie, and O.~Scaillet (2021).
\newblock Backtesting marginal expected shortfall and related systemic risk
  measures.
\newblock {\em Management science\/}~{\em 67\/}(9), 5730--5754.

\bibitem[\protect\citeauthoryear{BCBS}{BCBS}{2013}]{bcbs2013consultative}
BCBS (2013).
\newblock Consultative document {O}ctober 2013. fundamental review of the
  trading book: A revised market risk framework.
\newblock Basel Committee on Banking Supervision. Basel: Bank for International
  Settlements. BIS online publication. No. bcbs265.

\bibitem[\protect\citeauthoryear{BCBS}{BCBS}{2016}]{bcbs2016minimum}
BCBS (2016).
\newblock Minimum capital requirements for market risk. {J}anuary 2016.
\newblock Basel Committee on Banking Supervision. Basel: Bank for International
  Settlements. BIS online publication. No. d352.

\bibitem[\protect\citeauthoryear{BCBS}{BCBS}{2019}]{bcbs2019minimum}
BCBS (2019).
\newblock Minimum capital requirements for market risk. {F}ebruary 2019.
\newblock Basel Committee on Banking Supervision. Basel: Bank for International
  Settlements. BIS online publication. No. d457.

\bibitem[\protect\citeauthoryear{Bielecki, Cialenco, Pitera, and
  Schmidt}{Bielecki et~al.}{2020}]{bielecki2020fair}
Bielecki, T.~R., I.~Cialenco, M.~Pitera, and T.~Schmidt (2020).
\newblock Fair estimation of capital risk allocation.
\newblock {\em Statistics \& Risk Modeling\/}~{\em 37\/}(1-2), 1--24.

\bibitem[\protect\citeauthoryear{Boonen, Guillen, and Santolino}{Boonen
  et~al.}{2019}]{boonen2019forecasting}
Boonen, T.~J., M.~Guillen, and M.~Santolino (2019).
\newblock Forecasting compositional risk allocations.
\newblock {\em Insurance: Mathematics and Economics\/}~{\em 84}, 79--86.

\bibitem[\protect\citeauthoryear{Chen, Than-Thi, So, and Sriboonchitta}{Chen
  et~al.}{2019}]{chen2019quantile}
Chen, C.~W., H.~Than-Thi, M.~K. So, and S.~Sriboonchitta (2019).
\newblock Quantile forecasting based on a bivariate hysteretic autoregressive
  model with garch errors and time-varying correlations.
\newblock {\em Applied Stochastic Models in Business and Industry\/}~{\em
  35\/}(6), 1301--1321.

\bibitem[\protect\citeauthoryear{Demarta and McNeil}{Demarta and
  McNeil}{2005}]{demarta2005t}
Demarta, S. and A.~J. McNeil (2005).
\newblock The t copula and related copulas.
\newblock {\em International statistical review\/}~{\em 73\/}(1), 111--129.

\bibitem[\protect\citeauthoryear{Denault}{Denault}{2001}]{denault2001coherent}
Denault, M. (2001).
\newblock Coherent allocation of risk capital.
\newblock {\em Journal of Risk\/}~{\em 4\/}(1), 1--34.

\bibitem[\protect\citeauthoryear{Dhaene, Tsanakas, Valdez, and
  Vanduffel}{Dhaene et~al.}{2012}]{dhaene2012optimal}
Dhaene, J., A.~Tsanakas, E.~A. Valdez, and S.~Vanduffel (2012).
\newblock Optimal capital allocation principles.
\newblock {\em Journal of Risk and Insurance\/}~{\em 79\/}(1), 1--28.

\bibitem[\protect\citeauthoryear{Diebold and Mariano}{Diebold and
  Mariano}{1995}]{diebold1995comparing}
Diebold, F.~X. and R.~S. Mariano (1995).
\newblock Comparing predictive accuracy.
\newblock {\em Journal of Business \& Economic Statistics\/}~{\em 13\/}(3),
  134--144.

\bibitem[\protect\citeauthoryear{Dimitriadis and Hoga}{Dimitriadis and
  Hoga}{2023}]{dimitriadis2023dynamic}
Dimitriadis, T. and Y.~Hoga (2023).
\newblock Dynamic covar modeling.
\newblock {\em arXiv preprint, 2206.14275\/}.

\bibitem[\protect\citeauthoryear{Egozcue, Pawlowsky-Glahn, Mateu-Figueras, and
  Barcelo-Vidal}{Egozcue et~al.}{2003}]{egozcue2003isometric}
Egozcue, J.~J., V.~Pawlowsky-Glahn, G.~Mateu-Figueras, and C.~Barcelo-Vidal
  (2003).
\newblock Isometric logratio transformations for compositional data analysis.
\newblock {\em Mathematical Geology\/}~{\em 35\/}(3), 279--300.

\bibitem[\protect\citeauthoryear{Ehm, Gneiting, Jordan, and Kr{\"u}ger}{Ehm
  et~al.}{2016}]{ehm2016quantiles}
Ehm, W., T.~Gneiting, A.~Jordan, and F.~Kr{\"u}ger (2016).
\newblock Of quantiles and expectiles: consistent scoring functions, choquet
  representations and forecast rankings.
\newblock {\em Journal of the Royal Statistical Society Series B: Statistical
  Methodology\/}~{\em 78\/}(3), 505--562.

\bibitem[\protect\citeauthoryear{Emmer, Kratz, and Tasche}{Emmer
  et~al.}{2015}]{emmer2015best}
Emmer, S., M.~Kratz, and D.~Tasche (2015).
\newblock What is the best risk measure in practice? a comparison of standard
  measures.
\newblock {\em Journal of Risk\/}~{\em 18\/}(2), 31--60.

\bibitem[\protect\citeauthoryear{Fern{\'a}ndez and Steel}{Fern{\'a}ndez and
  Steel}{1998}]{fernandez1998bayesian}
Fern{\'a}ndez, C. and M.~F. Steel (1998).
\newblock On bayesian modeling of fat tails and skewness.
\newblock {\em Journal of the American Statistical Association\/}~{\em
  93\/}(441), 359--371.

\bibitem[\protect\citeauthoryear{Fissler and Hoga}{Fissler and
  Hoga}{2024}]{fissler2024backtesting}
Fissler, T. and Y.~Hoga (2024).
\newblock Backtesting systemic risk forecasts using multi-objective
  elicitability.
\newblock {\em Journal of Business \& Economic Statistics\/}~{\em 42\/}(2),
  485--498.

\bibitem[\protect\citeauthoryear{Fissler and Ziegel}{Fissler and
  Ziegel}{2016}]{fissler2016higher}
Fissler, T. and J.~F. Ziegel (2016, Aug).
\newblock Higher order elicitability and osband’s principle.
\newblock {\em The Annals of Statistics\/}~{\em 44\/}(4), 1680--1707.

\bibitem[\protect\citeauthoryear{Fissler and Ziegel}{Fissler and
  Ziegel}{2019}]{fissler2019order}
Fissler, T. and J.~F. Ziegel (2019).
\newblock Order-sensitivity and equivariance of scoring functions.
\newblock {\em Electronic Journal of Statistics\/}~{\em 13}, 1166--1211.

\bibitem[\protect\citeauthoryear{Fissler, Ziegel, and Gneiting}{Fissler
  et~al.}{2016}]{fissler2016expected}
Fissler, T., J.~F. Ziegel, and T.~Gneiting (2016, January).
\newblock Expected shortfall is jointly elicitable with value at
  risk-implications for backtesting.
\newblock {\em Risk Magazine\/}, 58--61.

\bibitem[\protect\citeauthoryear{Gribkova, Su, and Zitikis}{Gribkova
  et~al.}{2023}]{gribkova2023estimating}
Gribkova, N., J.~Su, and R.~Zitikis (2023).
\newblock Estimating the var-induced euler allocation rule.
\newblock {\em ASTIN Bulletin: The Journal of the IAA\/}~{\em 53\/}(3),
  619--635.

\bibitem[\protect\citeauthoryear{Hofert, Kojadinovic, Maechler, and Yan}{Hofert
  et~al.}{2023}]{copula}
Hofert, M., I.~Kojadinovic, M.~Maechler, and J.~Yan (2023).
\newblock {\em copula: Multivariate Dependence with Copulas}.
\newblock R package version 1.1-2.

\bibitem[\protect\citeauthoryear{Hoga and Demetrescu}{Hoga and
  Demetrescu}{2023}]{hoga2023monitoring}
Hoga, Y. and M.~Demetrescu (2023).
\newblock Monitoring value-at-risk and expected shortfall forecasts.
\newblock {\em Management Science\/}~{\em 69\/}(5), 2954--2971.

\bibitem[\protect\citeauthoryear{Huang, Lee, Liang, and Lin}{Huang
  et~al.}{2009}]{huang2009estimating}
Huang, J.-J., K.-J. Lee, H.~Liang, and W.-F. Lin (2009).
\newblock Estimating value at risk of portfolio by conditional copula-garch
  method.
\newblock {\em Insurance: Mathematics and Economics\/}~{\em 45\/}(3), 315--324.

\bibitem[\protect\citeauthoryear{Hu{\'e}, Hurlin, and Lu}{Hu{\'e}
  et~al.}{2024}]{hue2024backtesting}
Hu{\'e}, S., C.~Hurlin, and Y.~Lu (2024).
\newblock Backtesting expected shortfall: Accounting for both duration and
  severity with bivariate orthogonal polynomials.
\newblock {\em arXiv preprint arXiv:2405.02012\/}.

\bibitem[\protect\citeauthoryear{Jondeau and Rockinger}{Jondeau and
  Rockinger}{2006}]{jondeau2006copula}
Jondeau, E. and M.~Rockinger (2006).
\newblock The copula-garch model of conditional dependencies: An international
  stock market application.
\newblock {\em Journal of International Money and Finance\/}~{\em 25\/}(5),
  827--853.

\bibitem[\protect\citeauthoryear{Kalkbrener}{Kalkbrener}{2005}]{kalkbrener2005axiomatic}
Kalkbrener, M. (2005).
\newblock An axiomatic approach to capital allocation.
\newblock {\em Mathematical Finance\/}~{\em 15\/}(3), 425--437.

\bibitem[\protect\citeauthoryear{Koike and Hofert}{Koike and
  Hofert}{2021}]{koike2021modality}
Koike, T. and M.~Hofert (2021).
\newblock Modality for scenario analysis and maximum likelihood allocation.
\newblock {\em Insurance: Mathematics and Economics\/}~{\em 97}, 24--43.

\bibitem[\protect\citeauthoryear{Koike, Saporito, and Targino}{Koike
  et~al.}{2022}]{koike2022avoiding}
Koike, T., Y.~Saporito, and R.~Targino (2022).
\newblock Avoiding zero probability events when computing value at risk
  contributions.
\newblock {\em Insurance: Mathematics and Economics\/}~{\em 106}, 173--192.

\bibitem[\protect\citeauthoryear{Maume-Deschamps, Rulli{\`e}re, and
  Said}{Maume-Deschamps et~al.}{2016}]{maume2016capital}
Maume-Deschamps, V., D.~Rulli{\`e}re, and K.~Said (2016).
\newblock On a capital allocation by minimization of some risk indicators.
\newblock {\em European Actuarial Journal\/}~{\em 6\/}(1), 177--196.

\bibitem[\protect\citeauthoryear{McNeil, Frey, and Embrechts}{McNeil
  et~al.}{2015}]{mcneil2015quantitative}
McNeil, A.~J., R.~Frey, and P.~Embrechts (2015).
\newblock {\em Quantitative risk management: Concepts, techniques and tools}.
\newblock Princeton: Princeton University Press.

\bibitem[\protect\citeauthoryear{Newey and West}{Newey and
  West}{1987}]{newey1987simple}
Newey, W.~K. and K.~D. West (1987).
\newblock A simple, positive semi-definite, heteroskedasticity and
  autocorrelation consistent covariance matrix.
\newblock {\em Econometrica\/}~{\em 55\/}(3), 703.

\bibitem[\protect\citeauthoryear{Nolde and Ziegel}{Nolde and
  Ziegel}{2017}]{nolde2017elicitability}
Nolde, N. and J.~F. Ziegel (2017).
\newblock Elicitability and backtesting: Perspectives for banking regulation.
\newblock {\em The annals of Applied Statistics\/}~{\em 11\/}(4), 1833--1874.

\bibitem[\protect\citeauthoryear{Patton}{Patton}{2020}]{patton2020comparing}
Patton, A.~J. (2020).
\newblock Comparing possibly misspecified forecasts.
\newblock {\em Journal of Business \& Economic Statistics\/}~{\em 38\/}(4),
  796--809.

\bibitem[\protect\citeauthoryear{Patton, Ziegel, and Chen}{Patton
  et~al.}{2019}]{patton2019dynamic}
Patton, A.~J., J.~F. Ziegel, and R.~Chen (2019).
\newblock Dynamic semiparametric models for expected shortfall (and
  value-at-risk).
\newblock {\em Journal of Econometrics\/}~{\em 211\/}(2), 388--413.

\bibitem[\protect\citeauthoryear{Pawlowsky-Glahn and Buccianti}{Pawlowsky-Glahn
  and Buccianti}{2011}]{pawlowsky2011compositional}
Pawlowsky-Glahn, V. and A.~Buccianti (2011).
\newblock {\em Compositional Data Analysis: Theory and Applications}.
\newblock John Wiley \& Sons.

\bibitem[\protect\citeauthoryear{{R Core Team}}{{R Core Team}}{2023}]{Rcore}
{R Core Team} (2023).
\newblock {\em R: A Language and Environment for Statistical Computing}.
\newblock Vienna, Austria: R Foundation for Statistical Computing.

\bibitem[\protect\citeauthoryear{Steinwart, Pasin, Williamson, and
  Zhang}{Steinwart et~al.}{2014}]{steinwart2014elicitation}
Steinwart, I., C.~Pasin, R.~Williamson, and S.~Zhang (2014).
\newblock Elicitation and identification of properties.
\newblock In {\em Conference on Learning Theory}, pp.\  482--526. PMLR.

\bibitem[\protect\citeauthoryear{Tasche}{Tasche}{1999}]{tasche1999risk}
Tasche, D. (1999).
\newblock Risk contributions and performance measurement.
\newblock Working Paper, Techische Universit\"at M{\"u}nchen.

\bibitem[\protect\citeauthoryear{Tasche}{Tasche}{2008}]{tasche2008capital}
Tasche, D. (2008).
\newblock Capital allocation to business units and sub-portfolios: the euler
  principle.
\newblock In A.~Resti (Ed.), {\em Pillar II in the New Basel Accord: The
  Challenge of Economic Capital}, pp.\  423--453. Risk Books: London.

\bibitem[\protect\citeauthoryear{Taylor}{Taylor}{2019}]{taylor2019forecasting}
Taylor, J.~W. (2019).
\newblock Forecasting value at risk and expected shortfall using a
  semiparametric approach based on the asymmetric laplace distribution.
\newblock {\em Journal of Business \& Economic Statistics\/}~{\em 37\/}(1),
  121--133.

\bibitem[\protect\citeauthoryear{Taylor}{Taylor}{2022}]{taylor2022forecasting}
Taylor, J.~W. (2022).
\newblock Forecasting value at risk and expected shortfall using a model with a
  dynamic omega ratio.
\newblock {\em Journal of Banking \& Finance\/}~{\em 140}, 106519.

\bibitem[\protect\citeauthoryear{Tse and Tsui}{Tse and Tsui}{2002}]{tse2002}
Tse, Y. and A.~Tsui (2002).
\newblock A {Multivariate} {Generalized} {Autoregressive} {Conditional}
  {Heteroscedasticity} {Model} {With} {Time}-{Varying} {Correlations}.
\newblock {\em Journal of Business \& Economic Statistics\/}~{\em 20\/}(3),
  351--362.

\bibitem[\protect\citeauthoryear{Wang, Wang, and Ziegel}{Wang
  et~al.}{2022}]{wang2022backtesting}
Wang, Q., R.~Wang, and J.~Ziegel (2022).
\newblock E-backtesting.
\newblock {\em arXiv preprint arXiv:2209.00991\/}.

\bibitem[\protect\citeauthoryear{Wuertz, Chalabi, Setz, and Maechler}{Wuertz
  et~al.}{2022}]{fgarch}
Wuertz, D., Y.~Chalabi, T.~Setz, and M.~Maechler (2022).
\newblock {\em fGarch: Rmetrics - Autoregressive Conditional Heteroskedastic
  Modelling}.
\newblock R package version 4022.89.

\bibitem[\protect\citeauthoryear{Ziegel, Kr{\"u}ger, Jordan, and
  Fasciati}{Ziegel et~al.}{2020}]{ziegel2020robust}
Ziegel, J.~F., F.~Kr{\"u}ger, A.~Jordan, and F.~Fasciati (2020).
\newblock Robust forecast evaluation of expected shortfall.
\newblock {\em Journal of Financial Econometrics\/}~{\em 18\/}(1), 95--120.

\end{thebibliography}

\newpage

\appendix

\makeatletter
\renewcommand \thesection{S\@arabic\c@section}
\renewcommand\thetable{S\@arabic\c@table}
\renewcommand \thefigure{S\@arabic\c@figure}
\renewcommand \theproposition{S\@arabic\c@proposition}
\makeatother

\begin{center}
\bf \Large    Supplement to ``Forecasting and Backtesting Gradient Allocations of Expected
Shortfall''
\end{center}

Section~\ref{sec:proof:details} provides proofs and details of some statements presented in the main paper.
Section~\ref{sec:details:empirical} gives a detailed description of the empirical analysis in Section~\ref{sec:empirical} and some additional results of backtesting ESCs in the one-step approach~\citep{fissler2024backtesting}.
Throughout the supplementary material, references starting with ``S'' refer to this supplement, and those without this prefix refer to the main paper.

\section{Proofs and details}\label{sec:proof:details}

\subsection{Order sensitivity and orientation}\label{sec:order:sensitivity}

A key property of scoring functions regarding the comparative backtests is \emph{order sensitivity}, which concerns whether two misspecified forecasts of ESCs are ordered by means of the multi-objective scoring function in Proposition~\ref{prop:elicitability:esc} if one tuple of forecasts is closer to the true ESCs than another.
The next proposition presents conditions under which two misspecified forecasts are ordered in this sense.

\begin{proposition}\label{prop:order:sensitivity}
Let $\bX \in \tilde{\mathcal F}_\text{c}^1(\mathbb{R}^d)$.
For $v\in \mathbb{R}$ and $j\in \{1,\dots,d\}$, denote by $\mathcal I_j(v)$ an open interval with endpoints $\operatorname{ESC}_\alpha(X_j,S)$ and $\mathbb{E}[X_j|S>v]$.
\begin{enumerate}
    \item[(O1)]
    Fix $j\in \{1,\dots,d\}$.
    Let $\mathbf{S}_j$ be a multi-objective scoring function in Proposition~\ref{prop:elicitability:esc} (S1). 
    For two misspecified forecasts $(m_j,v),(m_j^\ast,v^\ast)\in \mathbb{R}^2$ of $(\operatorname{ESC}_\alpha(X_j,S),\operatorname{VaR}_\alpha(S))$, we have that $\mathbb{E}[{\mathbf S}_j((m_j,v),\bX)]\le_{\text{lex}}  \mathbb{E}[{\mathbf S}_j((m_j^\ast,v^\ast),\bX)]$  if one of the following cases hold:
    \begin{enumerate}
        \item[(i)] $v^\ast<v\le \operatorname{VaR}_\alpha(S)$;
        \item[(ii)] $\operatorname{VaR}_\alpha(S)\le v< v^\ast$;
        \item[(iii)] $v=v^\ast$ and $m_j^\ast<m_j\le \tilde m$ for all $\tilde m\in \mathcal I_j(v)$;
        \item[(iv)] $v=v^\ast$ and $\tilde m\le m_j<m_j^\ast$ for all $\tilde m\in \mathcal I_j(v)$.
    \end{enumerate}
\item[(O2)]
Let $\mathbf{S}$ be a multi-objective scoring function in Proposition~\ref{prop:elicitability:esc} (S2). 
For two misspecified forecasts $(\bm{m},v),(\bm{m}^\ast,v^\ast)\in \mathbb{R}^{d+1}$ of $(\operatorname{ESC}_\alpha(\bX,S),\operatorname{VaR}_\alpha(S))$, we have that $\mathbb{E}[{\mathbf S}((\bm{m},v),\bX)]\le_{\text{lex}}  \mathbb{E}[{\mathbf S}((\bm{m}^\ast,v^\ast),\bX)]$ if one of the following cases hold: (i); (ii); (iii) for all $j\in\{1,\dots,d\}$; (iv) for all $j\in\{1,\dots,d\}$.
     \end{enumerate}
 \end{proposition}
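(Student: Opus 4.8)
The plan is to exploit the product form of the expected scores together with the lexicographic order. For $\mathbf S_j$ in Proposition~\ref{prop:elicitability:esc}~(S1) one has $\mathbb E[\mathbf S_j((m_j,v),\bX)]=\bigl(\mathbb E[\S^\text{VaR}(v,S)],\,\mathbb E[\S_j^\text{ESC}((m_j,v),\bX)]\bigr)$, and likewise for $\mathbf S$ in (S2) with $\sum_{j=1}^d\mathbb E[\S_j^\text{ESC}((m_j,v),\bX)]$ in the second slot. Thus a $\le_{\text{lex}}$ comparison reduces to either (a) a \emph{strict} inequality between the VaR-components, which is what cases (i) and (ii) deliver and which does not involve the $m_j$'s; or (b) an equality of the VaR-components --- forced by $v=v^\ast$ in cases (iii) and (iv) --- together with the matching inequality between the ES-contribution components. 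Throughout I use that, for $\bX\in\tilde{\mathcal F}_{\text{c}}^1(\mathbb R^d)$, the law of $S$ has a strictly positive density, so $G:=F_S$ is continuous and strictly increasing, $\mathbb P(S>v)>0$ for every $v$, and $\operatorname{ESC}_\alpha(X_j,S)=\mathbb E[X_j\mid S>\operatorname{VaR}_\alpha(S)]$.

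\emph{VaR-component (cases (i) and (ii)).} The key input is the order sensitivity of the generalized piecewise-linear score $\S^\text{VaR}$: the map $v\mapsto\mathbb E[\S^\text{VaR}(v,S)]$ is strictly decreasing on $(-\infty,\operatorname{VaR}_\alpha(S)]$ and strictly increasing on $[\operatorname{VaR}_\alpha(S),\infty)$. This may be quoted from \citet{fissler2019order}, or derived from the mixture representation $\S^\text{VaR}(v,s)=\int\S_\eta^\text{VaR}(v,s)\,\mathrm dH(\eta)$ of \citet{ehm2016quantiles}, where $\mathrm dH=\mathrm dh$ has full support: a one-line computation gives $\mathbb E[\S_\eta^\text{VaR}(v,S)]=(1-\alpha)G(\eta)$ for $\eta<v$ and $=\alpha(1-G(\eta))$ for $\eta\ge v$, so for $v'<v$, $\mathbb E[\S^\text{VaR}(v,S)]-\mathbb E[\S^\text{VaR}(v',S)]=\int_{[v',v)}(G(\eta)-\alpha)\,\mathrm dH(\eta)$, which is $<0$ when $v\le\operatorname{VaR}_\alpha(S)$ (case (i), $v'=v^\ast$) and $>0$ when $v'\ge\operatorname{VaR}_\alpha(S)$ (case (ii), $v'=v$). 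In both cases the first component of $\mathbb E[\mathbf S_j((m_j,v),\bX)]$ lies strictly below that of $\mathbb E[\mathbf S_j((m_j^\ast,v^\ast),\bX)]$, so $\le_{\text{lex}}$ holds irrespective of $m_j,m_j^\ast$; the same argument applies verbatim to $\mathbf S$ in (S2).

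\emph{ES-contribution component (cases (iii) and (iv)).} Here $v=v^\ast$, so the VaR-components agree and it suffices to order the two ES-contribution components. By the Bregman representation established in the proof of Proposition~\ref{prop:esc:mixture}, $\S_j^\text{ESC}((m,v),\bx)=\id\{s>v\}\,\Phi_j(m,x_j)$ with $\Phi_j(m,x)=\phi_j(x)-\phi_j(m)-\phi_j'(m)(x-m)$, so conditioning on $\{S>v\}$ yields $\mathbb E[\S_j^\text{ESC}((m,v),\bX)]=\mathbb P(S>v)\,\mathbb E[\Phi_j(m,X_j)\mid S>v]$. Writing $\mu:=\mathbb E[X_j\mid S>v]$ --- the ``biased'' $j$th ES contribution induced by the misspecified VaR forecast $v$ --- and using that $x\mapsto\Phi_j(m_1,x)-\Phi_j(m_2,x)$ is affine, one obtains $\mathbb E[\Phi_j(m_1,X_j)\mid S>v]-\mathbb E[\Phi_j(m_2,X_j)\mid S>v]=\Phi_j(m_1,m_2)+(\mu-m_2)(\phi_j'(m_2)-\phi_j'(m_1))$. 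Since $\phi_j$ is strictly convex, $\Phi_j(m_1,m_2)>0$ for $m_1\neq m_2$ and $\phi_j'$ is strictly increasing; using this identity on the branch $m_1<m_2\le\mu$ and its mirror ($-\Phi_j(m_2,m_1)+(\mu-m_1)(\phi_j'(m_2)-\phi_j'(m_1))$) on $\mu\le m_1<m_2$ shows that $m\mapsto\mathbb E[\S_j^\text{ESC}((m,v),\bX)]$ is strictly decreasing on $(-\infty,\mu]$ and strictly increasing on $[\mu,\infty)$, with unique minimizer $\mu$ (order sensitivity of the Bregman score for the mean). The interval $\mathcal I_j(v)$, with endpoints the true $\operatorname{ESC}_\alpha(X_j,S)$ and the biased value $\mu$, is exactly the set on which being closer to the truth and being closer to $\mu$ disagree; its role here is to keep both $m_j$ and $m_j^\ast$ on one monotone branch. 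In case (iii) the hypothesis forces $m_j^\ast<m_j\le\mu$, hence $\mathbb E[\S_j^\text{ESC}((m_j,v),\bX)]<\mathbb E[\S_j^\text{ESC}((m_j^\ast,v),\bX)]$, and case (iv) is the mirror statement on $[\mu,\infty)$; with the equal VaR-components this gives $\le_{\text{lex}}$, proving (O1). Finally (O2): when (iii) (resp. (iv)) holds for every $j$, the term-wise inequalities add to $\sum_j\mathbb E[\S_j^\text{ESC}((m_j,v),\bX)]\le\sum_j\mathbb E[\S_j^\text{ESC}((m_j^\ast,v),\bX)]$, and again the VaR-components agree.

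The main obstacle is this last step: one has to recognize that, once the VaR forecast $v$ is misspecified, the minimizer of the expected $j$th ES-contribution score is the \emph{biased} value $\mathbb E[X_j\mid S>v]$, not $\operatorname{ESC}_\alpha(X_j,S)$, and then extract from the hypotheses on $\mathcal I_j(v)$ that the two competing forecasts sit on a common monotonicity branch of that expected score --- it is precisely this ``outside $\mathcal I_j(v)$'' condition that lets the conclusion still be stated relative to the true ES contribution. The Bregman-difference identity above makes the required monotonicity transparent for arbitrary, possibly non-differentiable, strictly convex $\phi_j$, and the VaR order sensitivity in the first step is routine.
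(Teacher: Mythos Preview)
Your proposal is correct and follows the same overall skeleton as the paper's proof: reduce (O2) to (O1); handle (i)/(ii) via order sensitivity of $\S^{\text{VaR}}$ (the paper simply cites Section~2.4 of \citet{ehm2016quantiles}, whereas you rederive it from the mixture representation); and for (iii)/(iv) use that the first components agree and compare the second components.

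The only genuine difference is in how you treat (iii)/(iv). The paper works at the level of the \emph{elementary} scores from Proposition~\ref{prop:esc:mixture}~(M1): it computes
\[
\mathbb{E}[\S_{j,\eta}^{\text{ESC}}((m_j^\ast,v),\bX)]-\mathbb{E}[\S_{j,\eta}^{\text{ESC}}((m_j,v),\bX)]
=\mathbb{P}(S>v)\,(\mu-\eta)\,\bigl(\id\{m_j^\ast\le\eta\}-\id\{m_j\le\eta\}\bigr),
\]
observes this is nonnegative for all $\eta$ and strictly positive on $(m_j^\ast,m_j)$ once $m_j\le\mu$, and then integrates against $H_j$. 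You instead stay with the Bregman form $\S_j^{\text{ESC}}((m,v),\bx)=\id\{s>v\}\Phi_j(m,x_j)$ and establish directly, via the affine-in-$x$ identity for $\Phi_j(m_1,x)-\Phi_j(m_2,x)$, that $m\mapsto\mathbb{E}[\S_j^{\text{ESC}}((m,v),\bX)]$ is strictly monotone on each side of $\mu=\mathbb{E}[X_j\mid S>v]$. Both routes hinge on the same key fact---that the biased value $\mu$, not the true ES contribution, is the minimizer once $v$ is misspecified---but the paper's argument showcases the Murphy-diagram machinery developed in Section~\ref{sec:robust}, while yours is a self-contained Bregman/strict-convexity argument that does not need Proposition~\ref{prop:esc:mixture}. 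Your explicit identification of the role of $\mathcal I_j(v)$ (keeping both forecasts on one monotone branch relative to $\mu$) is a nice conceptual addition that the paper's proof leaves implicit.
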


\begin{proof}
    (O2) is a direct consequence from (O1), and thus we will show (O1) for a fixed $j\in \{1,\dots,d\}$.
It is shown in Section~2.4~ of \citet{ehm2016quantiles} that $\mathbb{E}[\S^\text{VaR}(v,S)]< \mathbb{E}[\S^\text{VaR}(v^\ast,S)]$ holds if (i) or (ii) holds. Therefore, each of (i) and (ii) implies $\mathbb{E}[{\mathbf S}_j((m_j,v),\bX)]\le_{\text{lex}}  \mathbb{E}[{\mathbf S}_j((m_j^\ast,v^\ast),\bX)]$.

We next suppose that (iii) holds.
In this case, we have that $\mathbb{E}[S^\text{VaR}(v,S)]= \mathbb{E}[S^\text{VaR}(v^\ast,S)]$.
For any $\eta\in\R$ and $\bx \in \mathbb{R}^d$, the elementary scoring function $\S_{j\eta}^\text{ESC}$ in Proposition~\ref{prop:esc:mixture} (M1) satisfies:
\begin{align*}
\S_{j,\eta}^\text{ESC}((m_j^\ast,v),\bx)- \S_{j,\eta}^\text{ESC}((m_j,v),\bx) = \id{\{s>v\}}(x_j-\eta)(\id{\{m_j^\ast\le \eta\}}-\id{\{m_j\le \eta\}}),
\end{align*}
and thus:
\begin{align*}
\mathbb{E}[\S_{j,\eta}^\text{ESC}((m_j^\ast,v),\bX)]&- \mathbb{E}[\S_{j,\eta}^\text{ESC}((m_j,v),\bX)] \\
&= \mathbb{P}(S>v)(\mathbb{E}[X_j|S>v]-\eta)(\id{\{m_j^\ast\le \eta\}}-\id{\{m_j\le \eta\}}),
\end{align*}
which is nonnegative for all $\eta \in \mathbb{R}$ and positive for $m_j^\ast <\eta <m_j$.
Since $H_j((m_j^\ast,m_j))>0$, the mixture representation in Proposition~\ref{prop:esc:mixture} (M1) yields:
\begin{align*}
    \mathbb{E}[\S_j^\text{ESC}((m_j,v),\bX)]< \mathbb{E}[\S_j^\text{ESC}((m_j^\ast,v^\ast),\bX)],
    \end{align*}
    which implies $\mathbb{E}[{\mathbf S}_j((m_j,v),\bX)]\le_{\text{lex}}  \mathbb{E}[{\mathbf S}_j((m_j^\ast,v^\ast),\bX)]$.
(iv) is shown analogously.
\end{proof}

To summarize it briefly, the average scores of the two misspecified forecasts of ESCs are ordered when (i) the corresponding total VaRs are ordered, or (ii) the misspecified total VaRs are equal (in this case,  $\S_1^{\text{ESC}},\dots,\S_d^{\text{ESC}}$ elicit \emph{biased} ESCs), the two misspecified ESCs are ordered in the component-wise sense, and they are not between the true and biased ESCs.
Regarding the latter case, misspecification of total VaR produces intervals such that forecasts of ESCs belonging there have inconsistent orders of average scores.

Although various concepts of order sensitivity are proposed in~\citet{fissler2019order}, we are not aware of any improvement on Proposition~\ref{prop:order:sensitivity}
except for some special cases; see Sections~3.3.1 and 3.3.2 of \citet{fissler2019order} for related discussions.

Regarding the concept of \emph{orientation}~\citep{steinwart2014elicitation}, it is useful to associate the sign of $\mathbf{\bar V}$ with over- and under-estimations of ESCs since under-estimation of risk functionals is considered to be more problematic than over-estimation from a regulatory viewpoint.
With the identification functions in Proposition~\ref{prop:joint:identifiability:esc}, we have, for $\bX \in \mathcal F$, that $\mathbb{E}[\V^{\text{VaR}}(v,S)]\le 0$ if and only if $\operatorname{VaR}_\alpha(S)\le v$; i.e., the forecast $v$ over-estimates the total VaR.
Moreover, under the correct specification $v=\operatorname{VaR}_\alpha(S)$ and for $j\in \{1,\dots,d\}$, a forecast $m_j \in \mathbb{R}$ of the $j$th ESC is over-estimated; i.e., $\operatorname{ESC}_\alpha(X_j,S)\le m_j$, if and only if $\mathbb{E}[\V_j^{\text{ESC}}((m_j,v),\bX)]\le 0$.

\subsection{Proof of Proposition~\ref{prop:esc:mixture}}
(M2) is an immediate consequence of (M1), and thus it suffices to show (M1).
We fix $j\in \{1,\dots,d\}$.
For a strictly convex function $\phi_j:\R\rightarrow \R$ with derivative $\phi_j'$, define the Bregman-type function $\Phi_j:\mathbb{R}\times \mathbb{R}\rightarrow\mathbb{R}$ as:
\begin{align*}
\Phi_j(m,x)=\phi_j(x)-\phi_j(m)-\phi_j'(m)(x-m).
\end{align*}
Following the proof of Theorem~1 in~\citet{ehm2016quantiles}, it holds that:
\begin{align*}
\Phi_j(m,x)=\int_{m}^{x}(x -\eta)\,\mathrm{d}\phi_j'(\eta),\quad\text{ for $m,x\in \mathbb{R}$ such that } m < x.
\end{align*}
Therefore, for $m\in \mathbb{R}$ and $\bx =(x_1,\dots,x_d)\in \mathbb{R}^d$ such that $m<x_j$, we have that:
\begin{align*}
\S_j^\text{ESC}((m,v),\bx)&=\id{\{s > v\}} \left\{
\phi_j'(m)(m-x_j) - \phi_j(m)+\phi_j(x_j)
\right\}\\
&= \id{\{s > v\}}\Phi_j(m,x_j)\\
&= \int_{\mathbb{R}} \S_j^\text{ESC}((m,v),\bx)\,\mathrm{d}\phi_j'(\eta).
\end{align*}
This representation is shown to hold analogously for the case $m>x_j$ and is trivial for the case $m=x_j$.

\subsection{Interpretation of the elementary scoring function}\label{sec:interpretation}

The elementary scoring function~\eqref{eq:esc:elementary} can be interpreted as a degree of regret in the following setting.
Suppose that the $j$th branch of a company has a fixed capital $\eta$ to cover a future loss, whose point forecast is denoted by $m_j$, incurred at this branch under the specific situation when the  company incurs a loss greater than $v$.
If $m_j \le \eta$, then the branch expects that the initial capital can cover a future loss.
In this case, since an excess loss $(x_j-\eta)$ is incurred when the realized loss $x_j$ exceeds $\eta$ and $s>v$ occurs, the amount $\S_{j,\eta}^\text{ESC}((m_j,v),\bx)=\id{\{s>v\}}(x_j-\eta)_{+}$ can be understood as a degree of regret against the initial expectation.
If $\eta<m_j$, then the branch expects that the initial capital does not cover a future loss, and thus some risk treatment can be conducted.
In this case, the branch realizes the squandered opportunity of capital reduction $(\eta-x_j)_{+}$ when $s>v$ occurs, and thus the degree of regret against the initial expectation can be measured by $\S_{j,\eta}^\text{ESC}((m_j,v),\bx)=\id{\{s>v\}}(\eta-x_j)_{+}$.
Provided $v=\operatorname{VaR}_\alpha(S)$, the true $j$th ESC is the optimal $m_j$ minimizing the expected degree of regret.

\section{Details of the empirical analysis}\label{sec:details:empirical}

In this section we describe details of the empirical analysis omitted in Section~\ref{sec:empirical}.
For brevity, we describe models for the case when $\mathbb{T}_{\text{in}}=\{1,\dots,n\}$ and $\mathbb{T}_{\text{out}}=\{n+1\}$, and thus $\mathbb{T}=\{1,\dots,n+1\}$.

\subsection{Bottom-up and top-down approaches}\label{sec:top:bottom}

We classify an approach of building dynamic models of ESCs into the bottom-up and top-down approaches.
In the \emph{bottom-up approach} we estimate the risk functional of $\bX_{n+1}
|\mathcal G_n$ based on a joint model $\{\bX_t\}_{t \in \mathbb{T}}$.
This joint model determines the dynamics of $\{S_t\}_{t \in \mathbb{T}}$ and $\{(X_{j,t},S_t)\}_{t \in \mathbb{T}}$, based on which we predict total VaR, total ES, and ESCs at time $n+1$.
An example is the bottom-up GARCH model described in Section~\ref{sec:detailed:description}.
A major advantage of this approach is that the estimated model can be used to estimate quantities other than ESCs.
This versatility, however, comes at the expense of possibly low forecast accuracy of total VaR and ES since the aggregate loss is modeled only indirectly.
Another potential drawback is the challenge of modeling $d$-dimensional time series especially when $d$ is large.

In the \emph{top-down approach} we first estimate the dynamics of total VaR on $\mathbb{T}$ and then estimate the dynamics of ESCs on $\mathbb{T}$ based on some necessary model specification.
For example, once the dynamics of total VaR is specified, that of the $j$th ESC can be estimated only from the bivariate time series $\{(X_{j,t},S_t)\}_{t \in \mathbb{T}}$ for each $j\in\{1,\dots,d\}$ without specifying the joint model $\{\bX_{t}\}_{t \in \mathbb{T}}$.
Total ES can be predicted together with total VaR in the first stage or estimated as the sum of the ESCs estimated in the second stage.
Compared with the bottom-up approach, the potential benefits of the top-down approach are flexibility of the VaR model and reduction of modeling effort to estimate ESCs.
On the other hand, this model specification can be involved in the \emph{compatibility problem} that concerns the existence of a joint model $(\bX_t,S_t)$ whose bivariate marginal time series $(X_{j,t},S_t)$, $j=1,\dots,d$, are the specified ones.
Complexity of this compatibility problem may hinder the use of the estimated model other than the forecasting problem of ESCs.

\subsection{Detailed model description}\label{sec:detailed:description}

We now describe details of the models considered in Section~\ref{sec:empirical}.

\noindent\emph{Historical simulation (HS)}:
In this model the estimator $\widehat{VaR}_{n+1}^\text{HS}$ is given as the empirical $\alpha$-quantile based on $S_{1},\dots,S_{n}$.
We then predict the total ES and ESCs as follows:
\begin{align*}
\widehat{ESC}_{j,n+1}^{\text{HS}}&=\frac{1}{n}\sum_{t=1}^n X_{j,t}\id{\left\{S_{t} > \widehat{VaR}_{n+1}^\text{HS}\right\}},\quad j=1,\dots,n,\\
\widehat{ES}_{n+1}^{\text{HS}}&=\sum_{j=1}^n \widehat{ESC}_{j,n+1}^{\text{HS}}.
\end{align*}
\emph{Bottom-up GARCH (GARCH.BU)}:
In this fully parametric model we assume a copula-GARCH model with skew-$t$ residuals on $\{\bX_{t}\}_{t\in\mathbb{T}}$.
For $j=1,\dots,d$, we assume an AR-GARCH(1,1) model $X_{j,t}=\mu_{j}+\sigma_{j,t}Z_{j,t}$ on the marginal time series $\{X_{j,t}\}_{t \in \mathbb{T}}$, where $\mu_{j}\in\mathbb{R}$ is a location parameter, and $\sigma_{j,t}>0$ is a conditional standard deviation following the GARCH(1,1) dynamics.
The residual series $\{Z_{j,t}\}_{t\in \mathbb{T}}$ is assumed to be a strict white noise of a skew-$t$ distribution $\operatorname{Skt}(\nu_j, \gamma_j)$~\citep{fernandez1998bayesian} for degrees of freedom $\nu_j>0$ and skewness parameter $\gamma_j > 0$.
For the dependence structure, we assume that $(Z_{1,t},\dots,Z_{d,t})$, $t\in \mathbb{T}$, has a $t$-copula with degrees of freedom $\nu>0$ and correlation matrix $P$.
Under this model assumption, the joint distribution of $\bX_{n+1}|{\cal G}_{n}$ is specified by the skew-$t$ margins and the $t$-copula.
We first estimate parameters of the marginal GARCH models by the maximum likelihood estimation using the package \textsf{fGarch}~\citep{fgarch} implemented in \textsf{R}~\citep{Rcore}.
We then estimate parameters of the $t$-copula from $(Z_{1,t},\dots,Z_{d,t})$, $t\in \mathbb{T}_{\text{in}}$, by the maximum pseudo-likelihood estimation with inversion of Kendall's tau~\citep{demarta2005t}, which is implemented in the pacakge \textsf{copula}~\citep{copula}.
Based on parameters' estimates we forecast total VaR, total ES, and ESCs of $\bX_{n+1}|{\cal G}_{n}$ by Monte Carlo simulation with sample size $10^5$.
\\
\noindent
\emph{Top-down GARCH (GARCH.TD)}:
In this model we assume a bivariate copula-GARCH model with skew-$t$ residuals on $\{(X_{j,t},S_t)\}_{t \in \mathbb{T}}$ for each $j\in \{1,\dots,d\}$, where the marginal AR-GARCH model on $\{S_t\}_{t \in \mathbb{T}}$ is shared in common.
Based on the AR-GARCH(1,1) assumption on $\{S_t\}_{t\in \mathbb{T}}$ with skew-$t$ residuals, we estimate the dynamics of total VaR and total ES parametrically.
We then predict ESCs of $\bX_{n+1}|{\cal G}_{n}$ through Monte Carlo simulation.
Note that the model of the vector of residuals $(Z_{1,t},\dots,Z_{d,t})$, $t \in \mathbb{T}$, is left unspecified since we fit a common AR-GARCH(1,1) model $S_t = \mu + \sigma_t Z_t$ on the series $\{S_t\}_{t\in\mathbb{T}}$ and directly model the copula of $(Z_{j,t},Z_t)$, $t\in \mathbb{T}$, for $j=1,\dots,d$.\\
\noindent
\emph{Hysteretic autoregressive model with GARCH error and dynamic conditional correlations (HAR.GARCH)}:
In this model we consider the bivariate hysteretic autoregressive (HAR) model with GARCH error and dynamic conditional correlations~\citep{tse2002,chen2019quantile} on the time series $\{(X_{j,t},S_t)\}_{t\in \mathbb{T}}$ for $j=1,\dots,d$.
We say that a pair of asset return series $\utwi{y}_t=(y_{1,t},y_{2,t})$ follows an HAR.GARCH model if:
\begin{eqnarray*}
\utwi{y}_{t} &=&  \utwi{\Phi}_0^{(k)}+ \utwi{\Phi}_1^{(k)} \utwi{y}_{t-1} + \utwi{a}_t \quad \mbox{if }\:\:
R_t=k, \quad k\in\{1,2\}, \nonumber\\
a_{i,t}&=&\sqrt{h_{i,t}}\epsilon_{i,t},\quad \utwi{\epsilon}
_{t}\sim D(0,\utwi{\Gamma} _{t}),\quad i=1,2,  \nonumber\\
h_{i,t} &=& \alpha_{i,0}^{(k)} + \alpha_{i,1}^{(k)}a_{i,t-1}^{2} +\beta _{i,1}^{(k)}h_{i,t-1},  \nonumber \\
\utwi{\Gamma}_{t} &=& \left( 1-\theta_{1}^{(k)} - \theta_{2}^{(k)}\right) \utwi{\Gamma}^{(k)} + \theta_{1}^{(k)}\utwi{\Gamma} _{t-1} +
\theta_{2}^{(k)}\utwi{\Psi}_{t-1},
\end{eqnarray*}
where $R_t$ is a regime indicator defined by:
\begin{align*}
R_t = \left\{ \begin{array}{ll}
1,        & \mbox{if } z_{t}<r_L, \\
 R_{t-1}, & \mbox{if } r_L \leq z_{t} \leq r_U, \\
       2, & \mbox{if } z_{t}>r_U,
\end{array} \right.
\end{align*}
with $z_t$ being a hysteresis variable, and $\utwi{\Psi}_{t-1}=(\psi_{uv,t-1})_{u,v\in\{1,2\}}$ is a local correlation matrix defined by:
\begin{align*}
\psi _{uv,t-1}=\frac{\sum\limits_{h=1}^{3}\epsilon _{u,t-h}\epsilon _{v,t-h}}{\sqrt{\left( \sum\limits_{h=1}^{3}\epsilon _{u,t-h}^{2}\right) \left(
\sum\limits_{h=1}^{3}\epsilon _{v,t-h}^{2}\right) }}.
\end{align*}

Parameters of the HAR.GARCH model include $\utwi{\Phi}_0^{(k)}\in\mathbb{R}^{2}$, $\utwi{\Phi}_1^{(k)}\in\mathbb{R}^{2\times 2}$, $r_L,r_U\in \mathbb{R}$, and a positive-definite matrix $\utwi{\Gamma}^{(k)}\in\mathbb{R}^{2\times 2}$ with unit diagonal elements and non-negative real numbers $\alpha_{i,0}^{(k)}$, $\alpha_{i,1}^{(k)}$, $\beta _{i,1}^{(k)}$, $\theta_1^{(k)}$, and $\theta_2^{(k)}$ such that $0<\alpha_{i,1}^{(k)}+\beta _{i,1}^{(k)}<1$ and $0 < \theta_1^{(k)} + \theta_2^{(k)} < 1$ for $i,k\in \{1,2\}$.
For a hysteresis variable, we utilize an endogenous variable $z_t = y_{1,t}$ to exhibit hysteresis effects. 
Following \citet{chen2019quantile}, we choose an adapted multivariate Student t distribution, based on the scale mixtures of the normal representation, as the distribution $D(0,\utwi{\Gamma} _{t})$ for the bivariate vector of residuals $\utwi{\epsilon}_{t}$.
We utilize Bayesian methods to estimate parameters of the above model and the risk quantities of interest; see Sections~3 and~4 of~\citet{chen2019quantile}, respectively.
In our setting, $y_{1,t}$ stands for $X_{j,t}$, $j\in \{1,\dots,d\}$, and $y_{2,t}$ represents $S_{t}$.
As the total VaR is predicted $d$-times, we simply use the average of them.
Moreover, total ES is predicted as the sum of ESCs.\\
\noindent
\emph{Compositional regression model with least square estimation (CR.LSE)}:
Following~\citet{boonen2019forecasting}, we first obtain a series of ESCs by the \emph{elliptical formula}~\citep[Corollary~8.43 of][]{mcneil2015quantitative}:
\begin{align*}   \widehat{ESC}_t^{\text{EL}}&=\left(\widehat{ESC}_{1,t}^{\text{EL}},\dots,\widehat{ESC}_{d,t}^{\text{EL}}\right)^\top\\
    &=
    \hat{\utwi{\mu}} + \frac{\hat{\Sigma}_t \bone_d}{\bone_d^\top\hat{\Sigma}_t \bone_d} \left(\widehat{ES}_t^{\text{EL}}- \bone_d^\top\hat{\utwi{\mu}}\right),\quad t \in \mathbb{T},
\end{align*}
where $\hat{\utwi{\mu}}=(\hat{\mu_1},\dots,\hat{\mu_d})$ is the vector of location parameters, and $\hat{\Sigma}_t$ is the conditional covariance matrix estimated in GARCH.BU.
For the dynamics of total VaR and total ES, we use the same estimates used in
GARCH.TD. We then obtain the allocation weights as $\hat{\bm{w}}_t = \operatorname{C}\left(\widehat{ESC}_t^{\text{EL}}\right)$, $t \in \mathbb{T}$.
Regarding this set of allocation weights as compositional data, we estimate the parameter $\utwi{\theta}$ of $\upsilon_{\utwi{\theta}}$ in ~\eqref{eq:w:model} as follows:
\begin{align*}
\hat{\utwi{\theta}}=\operatorname{argmin}_{\utwi{\theta}\in \Theta} \sum_{t=1}^{n-1}\sum_{j=1}^d\left\{\hat{w}_{j,t+1}-\upsilon_{\utwi{\theta}}(\hat{\bm{w}}_s,\bX_s, s\le t)_j\right\}^2.
\end{align*}
Finally, we use this $\hat{\utwi{\theta}}$ to predict ESCs of $\bX_{n+1}|\mathcal G_n$ following Section~\ref{sec:proposed:model}.
\\
\noindent
\emph{Compositional regression model based on score optimization (CR.OPT)}:
This is the proposed model described in Section~\ref{sec:proposed:model}.
For the dynamics of total VaR and total ES, we use the same estimates used in CR.LSE.
We choose $\bm{w}_1=\operatorname{C}\left(\widehat{ESC}_1^{\text{EL}}\right)$ for the initial allocation weight.
When optimizing~\eqref{eq:theta:score}, we use the vector of parameters estimated in CR.LSE as the initial value.

\subsection{Backtests in the one-step approach}\label{sec:one:step:approach}

Due to multi-objective elicitability of ESCs, we also conduct two-sided and ``one and a half-sided'' \emph{Wald-tests} for tuple of ESCs and each of them in combination with total VaR; see Section~5 of \citet{fissler2024backtesting} for details of these tests.
In this one-step approach the major difference from the two-step approach is that the null hypothesis includes equality in the accuracy of total VaRs for two competing forecasts .
We report p-values of these tests in Table~\ref{table:S:test:wald}.

\begin{table}[p]
\caption{Results of the Wald-tests to compare the forecast accuracy of ESCs with CR.OPT as the benchmark model.
}
\label{table:S:test:wald}
\centering
\scalebox{0.85}{
\begin{tabular}{lrrr r lrrr}
  \hline
    &  \multicolumn{3}{c}{p-value$^a$} & &     &  \multicolumn{3}{c}{p-value$^a$} \\  \hline
H0 & $=$ CR.OPT & $\le$ CR.OPT & $\ge$ CR.OPT && H0 & $=$ CR.OPT & $\le$ CR.OPT & $\ge$ CR.OPT  \\
  \hline\\
  \multicolumn{4}{l}{(1) Tuple of ESCs} & &      \multicolumn{4}{l}{(2) ESC (AMZN)}\\
HS & {\bf 0.006} & {\bf 0.012} & {\bf 0.004} &  & HS & {\bf 0.010} & {\bf 0.012} & {\bf 0.006} \\
  GARCH.BU & 0.054 & 0.404 & {\bf 0.035} & &  GARCH.BU & 0.533 & 0.398 & 0.404 \\
  GARCH.TD &0.524 & 1.000 & 0.390 & &  GARCH.TD & 0.963 & 1.000 & 0.874 \\
  HAR.GARCH & 0.418 & 0.929 & 0.302 &  &  HAR.GARCH & 0.739 & 0.588 & 0.929 \\
  CR.LSE & 0.228 & 1.000 & 0.157& &CR.LSE & 0.358 & 1.000 & 0.255 \\
  CR.OPT & ---- & ---- & ---- &  &CR.OPT & ---- &----  & ---- \\
  \\
      \multicolumn{4}{l}{(3) ESC (GOOGL)}& &      \multicolumn{4}{l}{(4) ESC (TSLA)}\\
  HS & {\bf 0.019} & {\bf 0.012} & {\bf 0.012} &  & HS & {\bf 0.009} & {\bf 0.012} & {\bf 0.006} \\
  GARCH.BU & 0.434 & 0.404 & 0.315 &   &GARCH.BU & {\bf 0.034} & 0.404 & {\bf 0.021} \\
  GARCH.TD & 0.999 & 1.000 & 0.981 &   &  GARCH.TD & 0.526 & 1.000 & 0.392 \\
  HAR.GARCH & 0.559 & 0.929 & 0.420 & &   HAR.GARCH & 0.303 & 0.929 & 0.213 \\
  CR.LSE & 0.907 & 0.782 & 1.000 & &  CR.LSE & 0.211 & 1.000 & 0.144 \\
  CR.OPT & ---- & ---- & ---- & &CR.OPT & ---- & ---- & ---- \\
   \hline
\end{tabular}
}
\begin{tablenotes}
\item $^a$The p-values are calculated based on the three different null hypotheses with all of them including the equality of the accuracy of forecasted total VaRs, where ``$=$ CR.OPT" means that the model is equally accurate as CR.OPT, and ``$\le$ ($\ge$) CR.OPT" represents the hypothesis that the model is less (more) accurate than CR.OPT.
\end{tablenotes}
\end{table}

Due to joint identifiability of total ES~\citep[see~Section~2.1~of~][]{nolde2017elicitability} and $j$th ESC for $j=1,\dots,d$
 (see Proposition~\ref{prop:joint:identifiability:esc}) in combination with total VaR, we also conduct Wald-tests for these risk quantities in the one-step approach.
Namely, for total ES, we consider the two-sided Wald-test for the null hypothesis:
\begin{align*}
H_0^{=}: \mathbb{E}\left[\bar \V_T^{\text{VaR}}\right]=0 \quad \text{and}\quad  \mathbb{E}\left[\bar \V_T^{\text{ES}}\right] = 0,
\end{align*}
and the ``one and a half-sided'' Wald-test for each of the null hypotheses:
\begin{align*}
H_0^{\le_\text{lex}}&:    \mathbb{E}\left[\bar \V_T^{\text{VaR}}\right] =0\quad\text{and}\quad
        \mathbb{E}\left[\bar \V_T^{\text{ES}}\right] \le 0,\\
H_0^{\ge_\text{lex}}&:    \mathbb{E}\left[\bar \V_T^{\text{VaR}}\right] =0\quad\text{and}\quad
        \mathbb{E}\left[\bar \V_T^{\text{ES}}\right] \ge 0,
\end{align*}
based on Section~5 of~\citet{fissler2024backtesting} as considered in the above comparative backtests.
For the $j$th ESC for $j\in\{1,\dots,d\}$, we replace $\bar \V_T^{\text{ES}}=(1/T)\sum_{t=1}^T \V_t^{\text{ES}}$ above with $\bar \V_{j,T}^{\text{ESC}}=(1/T)\sum_{t=1}^T \V_{j,t}^{\text{ESC}}$.
We consider these null hypotheses since over- and under-estimations of ESCs can be judged when the corresponding total VaR is calibrated; see Section~\ref{sec:order:sensitivity}.
We summarize p-values of these tests in Table~\ref{table:V:test:wald}.

\begin{table}[p]
\caption{
Results of the Wald-tests to verify forecast accuracy of total ES and ESCs.
}
\label{table:V:test:wald}
\centering
\scalebox{1}{
\begin{tabular}{lrrr r lrrr}
  \hline
      &  \multicolumn{3}{c}{p-value$^a$} &  &   &  \multicolumn{3}{c}{p-value$^a$} \\  \hline
H0 & $=$ True & $\le$ True & $\ge$ True &  & H0 & $=$ True & $\le$ True & $\ge$ True  \\
  \hline\\
    \multicolumn{4}{l}{(1) Total ES} & &     \multicolumn{4}{l}{(2) ESC (AMZN)}\\
HS & {\bf 0.000} & {\bf 0.000} & {\bf 0.000}  & &   HS & {\bf 0.000} & {\bf 0.000} & {\bf 0.000} \\
  GARCH.BU & {\bf 0.008} & {\bf 0.005} & {\bf 0.025} & &   GARCH.BU &  {\bf 0.038} & {\bf 0.024} & {\bf 0.025} \\
  GARCH.TD & 0.174 & 0.118 & 0.204  & &    GARCH.TD & 0.270 & 0.188 & 0.204 \\
  HAR.GARCH & 0.175 & 0.118 & 0.118  & &   HAR.GARCH & {\bf 0.016} & 0.118 & {\bf 0.010} \\
  CR.LSE & 0.174 & 0.118 & 0.204  & &   CR.LSE & 0.276 & 0.192 & 0.204 \\
  CR.OPT & 0.174 & 0.118 & 0.204  & &   CR.OPT &  0.215 & 0.147 & 0.204 \\ \\
     \multicolumn{4}{l}{(3) ESC (GOOGL)}& &       \multicolumn{4}{l}{(4) ESC (TSLA)}\\
  HS & {\bf 0.000} & {\bf 0.000} & {\bf 0.000} & &   HS & {\bf 0.000} & {\bf 0.000} & {\bf 0.000} \\
  GARCH.BU &{\bf 0.037} & {\bf 0.024} & {\bf 0.025} & &   GARCH.BU & {\bf 0.011} & {\bf 0.007} & {\bf 0.025} \\
  GARCH.TD & 0.238 & 0.164 & 0.204 & &   GARCH.TD &  0.230 & 0.158 & 0.204 \\
  HAR.GARCH & {\bf 0.003} & 0.118 & {\bf 0.002} & &   HAR.GARCH & {\bf 0.003} & 0.118 & {\bf 0.002} \\
  CR.LSE & 0.264 & 0.184 & 0.204 & &   CR.LSE & 0.262 & 0.182 & 0.204 \\
  CR.OPT & 0.275 & 0.191 & 0.204 & &   CR.OPT & 0.280 & 0.195 & 0.204 \\
   \hline
\end{tabular}
}
\begin{tablenotes}
    \item $^a$The p-values are calculated based on the three different null hypotheses with all of them including the preciseness of forecasted total VaRs, where ``$=$ True'' means that the forecast is precise, ``$\le$ True'' represents the hypothesis that the forecast is under-estimated, and ``$\ge$ True'' stands for the case when the forecast is over-estimated.
\end{tablenotes}
\end{table}

\end{document}